\def\ZZ{\mathbb{Z}}
\def\QQ{\mathbb{Q}}
\numberwithin{equation}{section}
\newcommand{\rank}{\mathop{\rm rank} }
\newcommand{\Det}{\mathop{\rm Det} }
\newtheorem{Thm}{Theorem}[section]
\newtheorem{Prop}[Thm]{Proposition}
\newtheorem{Lem}[Thm]{Lemma}
\theoremstyle{definition}
\newtheorem{Ex}[Thm]{Example}
\title{On a weighted linear matroid intersection algorithm by 
	deg-det computation}
\author{Hiroki FURUE and Hiroshi HIRAI \\
Department of Mathematical Informatics, \\
Graduate School of Information Science and Technology,   \\
The University of Tokyo, Tokyo, 113-8656, Japan.\\
\texttt{\normalsize hiroki\_furue@mist.i.u-tokyo.ac.jp}\\
\texttt{\normalsize hirai@mist.i.u-tokyo.ac.jp}
}
\begin{document}
	\maketitle
	\begin{abstract}
		In this paper, we address the weighted linear matroid intersection problem
		from computation of the degree of the determinant of a symbolic matrix.
		We show that a generic algorithm computing 
		the degree of noncommutative determinants,  
		proposed by the second author, 
		becomes an $O(mn^3 \log n)$ time algorithm 
		for the weighted linear matroid intersection problem, 
		where two matroids are given by column vectors of $n \times m$ matrices $A,B$.
	    We reveal that our algorithm is viewed as a ``nonstandard" implementation 
	    of Frank's weight splitting algorithm for linear matroids. 
	    This gives a linear algebraic reasoning to Frank's algorithm.
	    Although our algorithm is slower than existing algorithms in the worst case estimate, 
	    it has a notable feature. Contrary to existing algorithms,
	    our algorithm works on different matroids 
	    represented by another ``sparse" matrices $A^0,B^0$, 
	    which skips unnecessary Gaussian eliminations for constructing residual graphs.
	\end{abstract}

Keywords:	
combinatorial optimization, polynomial time algorithm, weighted matroid intersection, the degree of determinant, weight splitting

\section{Introduction}
Several basic combinatorial optimization problems 
have linear algebraic formulations.
It is classically known \cite{Edmonds67} that the maximum cardinality of a matching 
in a bipartite graph $G = (U,V;E)$ with color classes $U = [n],V = [n']$
is equal to the rank of the matrix $A = \sum_{e \in E} A_{e} x_{e}$, 
where $x_{e}$ $(e \in E)$ are variables and 
$A_{e}$ is an $n \times n'$ matrix 
with $(A_{e})_{ij} := 1$ if $e = ij$ and zero otherwise. 
Such a rank interpretation is known for the linear matroid intersection,
nonbipartite matching, and linear matroid matching problems; see \cite{Lovasz89}.

The degree of the determinant of a polynomial (or rational) matrix 
is a weighted counter part of rank, 
and can formulate weighted versions of combinatorial optimization problems.
The maximum weight perfect matching problem 
in a bipartite graph $G = ([n],[n];E)$ with integer weights $c_e$ $(e \in E)$
corresponds to computing the degree $\deg_t \det A(t)$ of the determinant 
of the (rational) matrix $A(t) := \sum_{e \in E} A_{e} x_{e}t^{c_e}$.
Again, the weighted linear matroid intersection,
nonbipartite matching, and linear matroid matching problems
admit such formulations.

Inspired by the recent advance~\cite{GGOW15,IQS15b} of a noncommutative approach 
to symbolic rank computation, 
the second author~\cite{HH_degdet} 
introduced the problem of 
computing the degree $\deg_t \Det A(t)$ of 
the {\em Dieudonn\'e determinant} $\Det A(t)$ of a matrix $A(t) = \sum_{i} A_i(t) x_i$, where 
$x_i$ are pairwise noncommutative variables and
$A_i(t)$ is a rational matrix with commuting variable $t$. 
He established a general min-max formula for $\deg_t \Det A(t)$, 
presented a conceptually simple and generic algorithm, referred here to as ${\bf Deg\mbox{-}Det}$, for computing $\deg_t \Det A(t)$, 
and showed that $\deg_t \det A(t) = \deg_t \Det A(t)$ holds
if $A(t)$ corresponds to an instance of the weighted linear matroid intersection problem.
In particular, ${\bf Deg\mbox{-}Det}$ gives rise to a pseudo-polynomial time algorithm for 
the weighted linear matroid intersection problem.
In the first version of the paper~\cite{HH_degdet}, 
the second author asked 
(i) whether ${\bf Deg\mbox{-}Det}$ can be a (strongly) polynomial time algorithm for 
the weighted linear matroid intersection, 
and (ii) how ${\bf Deg\mbox{-}Det}$ is related 
to the existing algorithms for this problem. 
He pointed out some connection of ${\bf Deg\mbox{-}Det}$ 
to the primal-dual algorithm by Lawler~\cite{Lawler75} 
but the precise relation was not clear.

The main contribution of this paper is to answer the questions (i) and (ii):
\begin{itemize}
	\item We show that ${\bf Deg\mbox{-}Det}$ becomes an $O(nm^3 \log n)$ time algorithm for the weighted linear matroid intersection problem, 
	where the two matroids are represented and given by two $n \times m$ matrices $A,B$. 
	This answers affirmatively the first question.
	\item For the second question, we reveal the relation between our algorithm and 
	the {\em weight splitting algorithm} by Frank~\cite{Frank81}.
	This gives a linear algebraic reasoning to Frank's algorithm.
     We show that the behavior of our algorithm is precisely the same 
     as that of a slightly modified version of Frank's algorithm. 
     However our algorithm is rather different from the standard implementation 
     of Frank's algorithm for linear matroids.
     This relationship was unexpected and nontrivial for us, 
     since the two algorithms look quite different.
\end{itemize}

Although our algorithm is slower than the standard $O(mn^3)$-time implementation 
of Frank's algorithm in the worst case estimate, it has a notable feature.  
Frank's algorithm works on a subgraph $\bar G_X$ of the residual graph $G_X$ 
for a common independent set $X$,
where $G_X$ is determined by Gaussian elimination for $A,B$ and 
$\bar G_X$ is determined by a splitting of the weight.
On the other hand, our algorithm does not compute the residual graph $G_X$ but computes  
a non-redundant subgraph $G_X^0$ of $\bar G_X$, 
which is the residual graph of different matroids 
represented by another ``sparse" matrices $A^0,B^0$.
Consequently, our algorithm applies fewer elimination operations 
than the standard one, which will be a practical advantage.

\paragraph{Related work.}
The essence of {\bf Deg-Det} comes from the {\em combinatorial relaxation algorithm} 
by Murota~\cite{Murota95_SICOMP}, 
which is an algorithm computing the degree of 
the (ordinary) determinant of a polynomial/rational matrix; see~\cite[Section 7.1]{MurotaMatrix}.

Several algorithms have been proposed for the general weighted matroid intersection 
problem under the independence oracle model; see e.g., \cite[Section 41.3]{Schrijver} and the references therein. 
For linear matroids given by two $n \times m$ matrices, 
the current fastest algorithms (as far as we know)
are an $O(mn^\omega)$-time implementation of Frank's algorithm using fast matrix multiplication
and an $O(n m^{\frac{7-\omega}{5-\omega}} \log^{\frac{\omega-1}{5- \omega}} n \log m C)$-time algorithm by Gabow and Xu~\cite{Gabow96}, 
where $C$ is the maximum absolute value of weights $c_i$
and $\omega \in \left[ 2, 2.37 \right]$ denotes the exponent of the time complexity of matrix multiplication.
Huang, Kakimura, and Kamiyama~\cite{HKK2019} 
gave an $O(n m \log n_* + C m n _* ^{\omega -1})$-time algorithm, where $n_*$ is the maximum size of a common independent set.
This algorithm is currently fastest for the case of small $C$.

For unweighted linear matroid intersection,
Cunningham~\cite{Cunningham} showed that the classical Edmonds' algorithm
runs in $O(mn^2 \log n)$ time.
Harvey~\cite{Harvey09} gave a randomized $O(mn^{\omega-1})$-time algorithm.
His algorithm also treats the problem as the rank computation
of a matrix with variables $x_i$, and
uses random substitution of the variables
and fast matrix multiplication.

\paragraph{Organization.}
The rest of this paper is organized as follows. 
In Section~\ref{sec:preliminaries}, 
we introduce algorithm ${\bf Deg\mbox{-}Det}$, 
and describe basics of
the unweighted (linear) matroid intersection problem 
from a linear algebraic viewpoint; 
our algorithm treats the unweighed problem as a subproblem. 
In Section~\ref{sec:algorithm}, 
we first formulate the weighted linear matroid intersection problem as 
the degree of the determinant of a rational matrix $A$, and show 
that ${\bf Deg\mbox{-}Det}$ computes $\deg_t \det A$ correctly.  
Then we present our algorithm by specializing  ${\bf Deg\mbox{-}Det}$, 
analyze its time complexity, and reveal its relationship to Frank's algorithm.

In this paper, we deal with linear matroids represented over the field of rationals but our augment and algorithm work on an arbitrary field.

\section{Preliminaries}\label{sec:preliminaries}

\subsection{Notation}
Let $\QQ$ and $\ZZ$ denote the sets of rationals and integers, respectively.
Let ${\bf 0} \in \QQ^n$ denote the zero vector.
For $I \subseteq [n]:=\{1,2,...,n\}$, let ${\bf 1}_I  \in {\mathbb Q}^n$ denote the characteristic vector of $I$, that is, $({\bf 1}_{I})_k := 1$ if $k \in I$ and $0$ otherwise.
Here, ${\bf 1}_{[n]}$ is simply denoted by ${\bf 1}$.

For a polynomial $p = \sum_{i=0}^k a_{i} t^i \in \QQ[t]$ with $a_k \neq 0$, 
the degree $\deg_t p$ with respect to $t$
is defined as $k$.
The degree $\deg_t p/q$ of a rational function 
$p/q \in \QQ(t)$ with polynomials $p,q \in \QQ[t]$ 
is defined as $\deg_t p - \deg_t q$.
The degree of zero polynomial is defined as $- \infty$.

A rational function $p/q$ is called {\em proper} if $\deg_t p/q \leq 0$.
A rational matrix $Q \in \QQ(t)^{n \times m}$ is called proper 
if each entry of $Q$ is proper.
For a proper rational matrix $Q \in \QQ(t)^{n \times m}$, 
there is a unique matrix over $\QQ$, denoted by $Q^0$, such that
\[
Q = Q^0 + t^{-1} Q',
\]
where $Q'$ is some proper matrix.

For an integer vector $\alpha \in \ZZ^n$, let $(t^{\alpha})$ denote 
the $n \times n$ diagonal matrix having diagonals $t^{\alpha_1}, t^{\alpha_2},\ldots,t^{\alpha_n}$ in order, that is,
\[
(t^{\alpha}) =
\left( \begin{array}{cccc}
t^{\alpha_1} &                    &  & \\
                 & t^{\alpha_2}   &  & \\
                 &                     & \ddots & \\
                 &                      &           & t^{\alpha_n} 
\end{array}
\right).
\]

For a matrix $A \in \QQ^{n \times m}$ and $J \subseteq [m]$,
let  $A[J]$ denote the submatrix of $A$ consisting of 
the $j$-th columns for $j \in J$.
Additionally, for $I \subseteq [n]$, 
let $A[I,J]$ denote the submatrix of $A$ 
consisting of the $(i,j)$-entries for $i \in I,j \in J$.

\subsection{Algorithm {\bf Deg-Det}}\label{subsec:Deg-Det}
Given $n \times n$ rational matrices $M_1,M_2,\ldots,M_m \in \QQ(t)^{n \times n}$, 
consider the following matrix
\begin{equation*}
M := M_1 x_1 + M_2 x_2 + \cdots + M_m x_m \quad \in \QQ(t, x_1,x_2,\ldots,x_m), 
\end{equation*}
where $x_1,x_2,\ldots,x_m$ are variables and $M$ is regarded 
as a multivariate rational matrix with (pairwise commutative) 
variables $t, x_1,x_2,\ldots,x_m$.
We address the computation of the degree of the determinant of $M$ with respect to $t$.

Consider the following optimization problem:
\begin{eqnarray*}
{\rm (P)} \quad \mbox{Max.} &&  \deg_t \det P + \deg_t \det Q \\
\mbox{s.t.} && PM Q: \mbox{proper,}\\
&& P,Q \in \QQ(t)^{n \times n}: \mbox{nonsingular.}    
\end{eqnarray*}
This problem gives an upper bound of $\deg_t \det M$.
Indeed, if $P M Q$ is proper, 
then $\deg_t \det PMQ \leq 0$, 
and  $\deg_t \det M \leq - \deg_t \det P - \deg_t \det Q$.
In fact, it is shown \cite{HH_degdet} that the optimal value of (P)
is interpreted as the negative of the degree of the {\em Dieudonn\'e determinant} 
of $M$ for the case where $x_1,x_2,\ldots,x_m$ are pairwise noncommutative variables.

The following algorithm for (P) is due to \cite{HH_degdet}, 
which is viewed as a simplification of the {\em combinatorial relaxation algorithm} 
by Murota~\cite{Murota95_SICOMP}; see also \cite[Section 7.1]{MurotaMatrix}.
\begin{description}
	\item[Algorithm: Deg-Det]
	\item[Input:] $M = M_1x_1 + M_2 x_2 + \cdots + M_m x_m$, where $M_i \in \QQ(t)^{n \times n}$ 
	for $i \in [m]$. 
	\item[Output:] An upper bound of $\deg_t \det M$ (the negative of the optimal value of (P)).
	\item[0:]  Let $P := t^{-d} I$ and $Q := I$, where $d$ is the maximum degree of entries 
	in $M$.  Let $D^* := nd$.
	\item[1:] Solve the following problem:
	\begin{eqnarray*}
		({\rm P}^0) \quad {\rm Max.} && r + s  \\
		{\rm s.t.} && \mbox{$K (PMQ)^0 L$ has an $r \times s$ zero submatrix,} \\
		&& K, L \in \QQ^{n \times n}: \mbox{nonsingular},
	\end{eqnarray*}
	and obtain optimal matrices $K,L$; recall the notation $(\cdot)^{0}$ in Section 2.1.
	\item[2:] If the optimal value $r+s$ is at most $n$, 
	then stop and output $D^*$.
	\item[3:] 
	Let $I$ and $J$ be the sets of row and column indices, respectively, of 
	the $r \times s$ zero submatrix of $K (PMQ)^0 L$.
	Find the maximum integer $\kappa (\geq 1)$ 
	such that
	$(t^{\kappa{\bf 1}_{I}})K PMQ L 
	(t^{- \kappa {\bf 1}_{[n] \setminus J}})$ is proper. 
	
	If $\kappa$ is unbounded, then output $-\infty$. 
	Otherwise, let $P \leftarrow (t^{\kappa{\bf 1}_{I}})KP$, 
	$Q \leftarrow QL(t^{-\kappa {\bf 1}_{[n]\setminus  {J}}})$ and  
	$D^* \leftarrow D^* - \kappa (r+s -n)$.
	Go to step 1.
\end{description}
Observe that  in each iteration $(P,Q)$ is a feasible solution of (P), and
$D^*$ equals $- \deg_t \det P - \deg_t \det Q$.
Thus, (P) gives an upper bound of $\deg_t \det M$.
We are interested in the case where the algorithm outputs $\deg_t \det M$ correctly.
\begin{Lem}[\cite{HH_degdet}]\label{lem:optimality}
	In step 2 of {\bf Deg-Det}, the following holds: 
	\begin{itemize}		
		\item[{\rm (1)}] If $r+s > n$, then $(PMQ)^0$ is singular over $\QQ(x_1,x_2,\ldots,x_m)$.
		\item[{\rm (2)}] If $(PMQ)^0$ is nonsingular, 
		then $D^* = \deg_t \det M$.
	\end{itemize}
\end{Lem}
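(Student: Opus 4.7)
\smallskip
\textbf{Proof plan.} The strategy is to treat the two parts separately, exploiting the elementary structural meaning of the $(\cdot)^0$ operation. For~(1), a zero submatrix of the prescribed size forces a matrix to be singular, so the conclusion is almost immediate from the optimality condition in step~1. For~(2), I would isolate a single algebraic ingredient---multiplicativity of $(\cdot)^0$ on proper entries---combine it with the running invariant $D^*=-\deg_t\det P-\deg_t\det Q$, and read off the equality $\deg_t\det M=D^*$.

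For part~(1), let $I,J$ be the row/column sets of the $r\times s$ zero block in $K(PMQ)^0 L$. In the Leibniz expansion, any permutation $\sigma$ of $[n]$ would need to map $I$ entirely into $[n]\setminus J$ in order to avoid a vanishing factor; but $|\sigma(I)|=r>n-s=|[n]\setminus J|$ by the hypothesis $r+s>n$, so by pigeonhole this is impossible and every term of the expansion vanishes. Hence $\det(K(PMQ)^0 L)=0$ in $\QQ(x_1,\ldots,x_m)$, and since $K,L\in\QQ^{n\times n}$ are nonsingular, this transfers to $\det(PMQ)^0=0$, i.e., singularity over $\QQ(x_1,\ldots,x_m)$.

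For part~(2), I would first verify that for proper $a,b$ the decompositions $a=a^0+t^{-1}a'$ and $b=b^0+t^{-1}b'$ yield $(ab)^0=a^0 b^0$ by direct expansion; applying this entrywise to the Leibniz expansion of $\det PMQ$ shows that properness of $PMQ$ implies properness of $\det PMQ$ together with $(\det PMQ)^0=\det(PMQ)^0$. If the latter is nonzero in $\QQ(x_1,\ldots,x_m)$, then $\deg_t\det PMQ=0$, hence $\deg_t\det M=-\deg_t\det P-\deg_t\det Q$. It then remains to confirm the invariant $D^*=-\deg_t\det P-\deg_t\det Q$: it is clear after step~0 (where $\deg_t\det P=-nd$ and $\deg_t\det Q=0$), and in step~3 the updates change $\deg_t\det P$ by $+\kappa r$ and $\deg_t\det Q$ by $-\kappa(n-s)$, since the scalar factors $\det K,\det L\in\QQ^*$ contribute nothing to the $t$-degree; the right-hand side therefore drops by exactly $\kappa(r+s-n)$, matching the update of $D^*$. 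Combining these gives the conclusion.

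I do not anticipate any serious obstacle: the multiplicativity of $(\cdot)^0$ is a one-line calculation and the invariant is routine bookkeeping. The one subtlety worth making explicit is that $(PMQ)^0$ is a matrix linear in $x_1,\ldots,x_m$, so nonsingularity must be interpreted as nonvanishing of the determinant in $\QQ(x_1,\ldots,x_m)$ in order for the implication $(\det PMQ)^0\neq 0$ to go through in part~(2).
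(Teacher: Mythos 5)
Your proposal is correct and follows essentially the same route as the paper: part (1) is the observation that an $r\times s$ zero block with $r+s>n$ forces singularity (transferred through the nonsingular $K,L$), and part (2) rests on $\deg_t\det PMQ=\deg_t\det(PMQ)^0=0$ together with the running invariant $D^*=-\deg_t\det P-\deg_t\det Q$, which the paper records just before the lemma. Your additional details---the Leibniz/pigeonhole argument, the multiplicativity of $(\cdot)^0$ giving $(\det PMQ)^0=\det(PMQ)^0$, and the bookkeeping $\deg_t\det P\mapsto\deg_t\det P+\kappa r$, $\deg_t\det Q\mapsto\deg_t\det Q-\kappa(n-s)$---merely flesh out steps the paper treats as immediate.
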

\begin{proof}
	(1). It is obvious that
	any $n \times n$ matrix is singular if it has an 
	$r \times s$ zero submatrix with $r+s > n$.
	
	(2).
	$P MQ$ is written as $(PMQ)^0 + t^{-1} N$ for some proper $N$.
	If $(PMQ)^0$ is nonsingular, 
	then $\deg_t \det PMQ = \deg_t \det (PMQ)^0 = 0$, 
	and hence  $\deg_t \det M = - \deg_t \det P - \deg_t \det Q = D^*$.
\end{proof}
	
\subsection{Algebraic formulation for linear matroid intersection}\label{subsec:formulation}
Let $A = (a_1\ a_2\ \cdots\ a_m)$ be an $n \times m$ matrix over $\QQ$.
Let ${\bf M}(A) = ([m], {\cal I}(A))$ denote the linear matroid represented 
by $A$. Specifically, the ground set of the matroid ${\bf M}(A)$ is 
the set $[m]$ of the column indices, and the family ${\cal I}(A)$ 
of independent sets of ${\bf M}(A)$ consists of all subsets $X \subseteq [m]$
such that the corresponding column vectors $a_i$ $(i \in X)$ are linearly independent.
Let $\rho_A: 2^{[m]} \to \ZZ$ 
denote the rank function of ${\bf M}(A)$, that is,  
$\rho_A(X) := \max \{ |Y| \mid Y \in {\cal I}(A), Y \subseteq X\}$.
A minimal (linearly) dependent subset is called a circuit.
See, e.g., \cite[Chapter39]{Schrijver} for basics on matroids.

Suppose that we are given another $n \times m$ matrix 
$B = (b_1\ b_2\ \cdots\ b_m) \in \QQ^{n \times m}$.
Let ${\bf M}(B) = ([m], {\cal I}(B))$ be the corresponding linear matroid.
A common independent set of ${\bf M}(A)$ and ${\bf M}(B)$ is a subset $X \subseteq [m]$ such that $X$ 
is  independent for both ${\bf M}(A)$ and ${\bf M}(B)$.
The linear matroid intersection problem is to find a common independent set of the maximum cardinality. 
To formulate this problem linear algebraically, 
define an $n \times n$ matrix $M = M(A,B)$ over $\QQ(x_1,x_2,\ldots,x_m)$ by
\[
M := \sum_{i=1}^m a_i b_i^{\top} x_i,
\]
where $x_1,x_2,\ldots,x_m$ are variables.
The following is the matroid intersection theorem and its linear algebraic sharpening.
\begin{Thm}[{\cite{Edmonds79}; see also \cite{Lovasz89,TI74}}]\label{thm:MatroidIntersection}
The following quantities are equal:
\begin{itemize}
	\item[{\rm (1)}] The maximum cardinality of a common independent set of ${\bf M}(A)$ and ${\bf M}(B)$.
	\item[{\rm (2)}] The minimum of $\rho_A(J) + \rho_B([m] \setminus J)$ 
	over $J \subseteq [m]$.
	\item[{\rm (1$'$)}] $\rank M$.
	\item[{\rm (2$'$)}] $2n$ minus the maximum of $r+s$ 
	such that 
	$K M L$ has an $r \times s$ zero submatrix for some nonsingular matrices $K,L \in \QQ^{n \times n}$.
\end{itemize}	
\end{Thm}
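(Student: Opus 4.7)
My plan is to establish the cyclic chain $(1) \leq (1') \leq (2') \leq (2) \leq (1)$, which forces all four quantities to coincide. The final step $(2) \leq (1)$ is the classical min-max theorem of Edmonds~\cite{Edmonds79}, which I would cite directly; the other three inequalities are the genuinely linear-algebraic content.

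For $(1) \leq (1')$, the factorization $M = A\,\mathrm{diag}(x_1,\ldots,x_m)\,B^{\top}$ together with the Cauchy--Binet formula gives, for any $I, I' \subseteq [n]$ of size $k$,
\[
\det M[I,I'] = \sum_{J \subseteq [m],\,|J| = k} \det A[I,J]\,\det B[I',J] \prod_{i \in J} x_i.
\]
Given a common independent set $X$ of size $k$, I would pick $I, I' \subseteq [n]$ with both $\det A[I, X]$ and $\det B[I', X]$ nonzero, which exist because $X$ is independent in both ${\bf M}(A)$ and ${\bf M}(B)$; the monomial $\prod_{i \in X} x_i$ then survives in $\det M[I, I']$, so $\rank M \geq k$. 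For $(1') \leq (2')$ I would invoke the elementary bound that any $n \times n$ matrix with an $r \times s$ zero submatrix has rank at most $(n-r) + (n-s)$, combined with $\rank M = \rank(KML)$, which holds because $K$ and $L$ are constant and nonsingular.

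The principal construction is $(2') \leq (2)$. Given $J \subseteq [m]$ with complement $J' := [m] \setminus J$, I would choose nonsingular $K, L \in \QQ^{n \times n}$ so that $KA[J]$ has its last $n - \rho_A(J)$ rows zero and $L^{\top} B[J']$ has its last $n - \rho_B(J')$ rows zero; both are achievable by standard row reduction on a basis of the respective column span. Writing $KML = \sum_{i} (Ka_i)(L^{\top} b_i)^{\top} x_i$, the entry $(KML)_{p,q}$ vanishes whenever $p$ lies in the last $n - \rho_A(J)$ rows and $q$ in the last $n - \rho_B(J')$ columns: for $i \in J$ the factor $(Ka_i)_p$ is zero, and for $i \in J'$ the factor $(L^{\top} b_i)_q$ is zero. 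This produces a zero submatrix of size $(n - \rho_A(J)) \times (n - \rho_B(J'))$, hence $(2') \leq \rho_A(J) + \rho_B(J')$; taking the minimum over $J$ closes the chain. The only nontrivial step is this construction, where one must realize two independent rank reductions through a single pair of constant matrices and verify that their zero patterns line up into a single block of the claimed size; the remaining inequalities reduce to one-line consequences of Cauchy--Binet and the rank-versus-zero-submatrix bound.
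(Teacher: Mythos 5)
Your proposal is correct and uses essentially the same ingredients as the paper's sketch: Cauchy--Binet for relating $\rank M$ to common independent sets, the elementary rank bound for a matrix with an $r \times s$ zero block, and the construction of $K,L$ from $J$ (your row-reduction making the last $n-\rho_A(J)$ rows of $KA[J]$ vanish is exactly the paper's choice of a basis of the orthogonal complement of $\mathrm{span}\{a_i \mid i \in J\}$). The only difference is organizational --- you close a cyclic chain of inequalities citing only the hard direction $(2) \leq (1)$ of Edmonds' theorem, while the paper cites the full equality $(1)=(2)$ and proves $(1)=(1')$ in both directions --- which does not change the substance.
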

\begin{proof}[Sketch of Proof]
	(1) $=$ (2) is nothing but the matroid intersection theorem.
	
	(1) $=$ (1$'$). A $k \times k$ submatrix $M'$ of $M$ is 
	represented by $M' = A' D B'^{\top}$, where $A',B'$ 
	are $k \times m$ submatrices of $A,B$, 
	and $D$ is the diagonal matrix with diagonals $x_1,x_2,\ldots,x_m$ (in order). 
	From Binet-Cauchy formula, we see that $\det M' \neq 0$ if and only 
	if there is a $k$-element subset $X \subseteq [m]$ such that $\det A'[X] \det B'[X] \neq 0$. 
	Thus, $\rank M \geq k$ if and only if 
	there is a common independent set of cardinality $k$.

	(2) $\geq$ (2$'$). Take a basis $u_1,u_2,\ldots,u_r$ 
	of the orthogonal complement of the vector space spanned by $\{a_i \mid i \in J\}$, and extend it to a basis $u_1,u_2,\ldots,u_n$ of $\QQ^n$, where $r = n - \rho_A(J)$.
	Similarly, take a basis $v_1,v_2,\ldots,v_n$ of $\QQ^n$ that 
	contains a basis $v_1,v_2,\ldots,v_s$ of  the orthogonal complement of the vector space spanned by $\{b_i \mid i \in [m] \setminus J\}$, where $s = n - \rho_{B}([m] \setminus J)$.
   Then $u_k^{\top} a_i b_i^{\top} v_{\ell} = 0$ for all $k\in [r]$, 
   $\ell \in [s]$, and $i \in [m]$.
    This means that $KML$ has an $r \times s$ zero submatrix 
	for $K = (u_1\ u_2\ \cdots\ u_n)^{\top}$ and $L = (v_1\ v_2\ \cdots\ v_n)$.
	
	(2$'$) $\geq$ (1$'$). If $KML$ has an $r \times s$ zero submatrix,
	then $\rank M = \rank KML \leq n - r + n- s$.
\end{proof}

Let us briefly explain Edmonds' algorithm to obtain 
a common independent set of the maximum cardinality.
For any common independent set $X$, 
the auxiliary (di)graph $G_X = G_X(A,B)$ is defined as follows.
The set $V(G_X)$ of nodes of $G_X$ 
is equal to the ground set $[m]$ of the matroids, and the set $E(G_X)$ of arcs 
is given by: $(i,j) \in E(G_X)$ if and only if one of the following holds:
\begin{itemize}
	\item $i \in X$, $j \not \in X$, and $i,j$ belong to a circuit of ${\bf M}(A)$.
	\item $i  \not \in X$, $j \in X$, and $i,j$ belong to a circuit of ${\bf M}(B)$.
\end{itemize}
Let $S_X = S_X(A)$ denote the subset of nodes $i\in E \setminus X$ such that $X \cup \{i\}$ is independent in ${\bf M}(A)$, and  $T_X = T_X(B)$ denote the subset of nodes $i \in E \setminus X$ such that $X \cup \{i\}$ is independent in ${\bf M}(B)$. See Figure \ref{fig:G_X} for $G_X$, $S_X$, and $T_X$.

\begin{figure}[t]
	\centering
	\includegraphics[width=8cm]{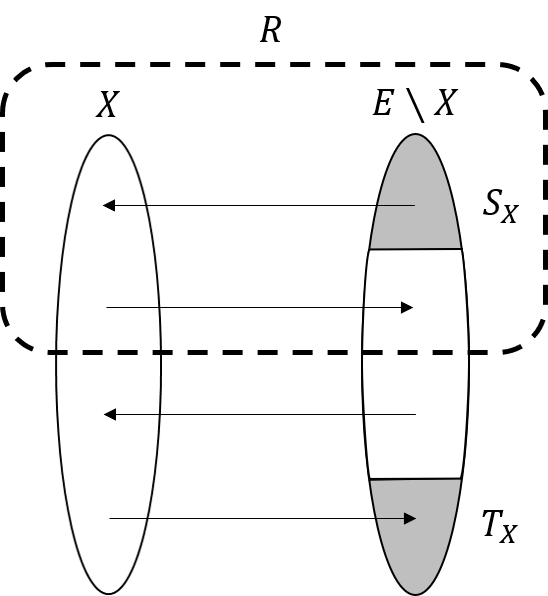}
	\caption{The auxiliary graph $G_{X}$}
	\label{fig:G_X}
\end{figure}

\begin{Lem}[{\cite{Edmonds79}}]\label{lem:optimality_matroidintersection}
	Let $X$ be a common independent set, and 
	let $R$ be the set of nodes reachable from $S_X$ in $G_X$.
	\begin{itemize}
		\item[{\rm (1)}] Suppose that $R \cap T_X \neq \emptyset$.
		For a shortest path $P$ from $S_X$ to $T_X$, 
		the set $X \triangle V(P)$ is a common independent set with 
		$|X \triangle V(P)| = |X| +1$.
		\item[{\rm (2)}] Suppose that  $R \cap T_X = \emptyset$. 
		Then $X$ is a maximum common independent set and $R$ attains $\min_{J \subseteq [m]} \rho_A(J) + \rho_B ([m] \setminus J)$.
	\end{itemize}
\end{Lem}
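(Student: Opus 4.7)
The plan is to follow the classical augmenting-path and LP-duality template for matroid intersection, specialized to the linear setting. For part (2) I would exhibit the dual partition $J := [m]\setminus R$ and verify $\rho_A(J) + \rho_B([m]\setminus J) = |X|$; by Theorem~\ref{thm:MatroidIntersection} this simultaneously certifies that $X$ is a maximum common independent set and that $R$ attains the minimum. For part (1) I would use the shortest-path property of $P$ to show that $X \triangle V(P)$ lies in $\mathcal{I}(A) \cap \mathcal{I}(B)$, while the cardinality count $|X \triangle V(P)| = |X|+1$ will be immediate from the alternation structure.

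For part (2), I split $|X| = |X \cap R| + |X \setminus R|$. The bounds $\rho_B(R) \geq |X \cap R|$ and $\rho_A([m]\setminus R) \geq |X \setminus R|$ are immediate from $X \in \mathcal{I}(A) \cap \mathcal{I}(B)$. For the reverse, pick $i \in R \setminus X$: since $R \cap T_X = \emptyset$ we have $i \notin T_X$, so $X + i$ is dependent in $\mathbf{M}(B)$, and contains a unique $\mathbf{M}(B)$-circuit $C$. For every $k \in C \cap X$, the arc $(i,k)$ lies in $E(G_X)$, and since $i \in R$, reachability forces $k \in R$. Hence $C \subseteq (X \cap R) + i$, so $i$ is in the $\mathbf{M}(B)$-span of $X \cap R$, giving $\rho_B(R) = |X \cap R|$. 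A symmetric argument handles $i \in ([m]\setminus R)\setminus X$: since $i \notin S_X \subseteq R$, the set $X + i$ contains a unique $\mathbf{M}(A)$-circuit $C$, and every $k \in C \cap X$ must lie outside $R$ (otherwise $(k,i) \in E(G_X)$ would force $i \in R$), so $\rho_A([m]\setminus R) = |X \setminus R|$. Summing gives $\rho_A(J) + \rho_B([m]\setminus J) = |X|$.

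For part (1), the endpoint count gives $|X \triangle V(P)| = |X|+1$ because both endpoints of $P$ lie outside $X$ while interior vertices alternate between $X$ and $[m] \setminus X$. The substance is the preservation of independence. I would work matroid by matroid. Consider the $\mathbf{M}(A)$-arcs on $P$, i.e.\ the exchange pairs $(k,i)$ with $k \in X$, $i \notin X$ appearing consecutively along $P$. Shortestness of $P$ forbids any chord of $G_X$ between vertices of $P$, and this chord-freeness is exactly what is needed to conclude that the induced bipartite graph on $X \cap V(P)$ versus $V(P) \setminus X$ admits a unique perfect matching of exchange pairs. By the standard unique-perfect-matching argument (Lindström/Gessel--Viennot style: this matching contributes a single nonzero Leibniz term to $\det A[X \triangle V(P)]$), we obtain $X \triangle V(P) \in \mathcal{I}(A)$. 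The symmetric argument using the $\mathbf{M}(B)$-arcs of $P$ gives $X \triangle V(P) \in \mathcal{I}(B)$.

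The main obstacle is the chord-freeness step in part (1): one must rule out every conceivable short-cut in $G_X$—forward arcs skipping ahead along $P$, backward arcs, and mixed chords combining an $\mathbf{M}(A)$-arc with an $\mathbf{M}(B)$-arc—by reducing each case to a strictly shorter $S_X$--$T_X$ walk, contradicting minimality of $P$. Once chord-freeness is established, the unique-matching conclusion is a clean exchange computation, and the symmetry between $A$ and $B$ lets us reuse the same argument on both sides.
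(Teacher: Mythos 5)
First, a remark on the comparison itself: the paper states this lemma with a citation to Edmonds (1979) and gives \emph{no} proof of it, so there is no in-paper argument to measure yours against; I can only assess your plan on its own terms. It is the standard augmenting-path/dual-certificate proof. Your part (2) is complete and correct: the certificate $J=[m]\setminus R$, the trivial inequalities $\rho_B(R)\ge|X\cap R|$ and $\rho_A([m]\setminus R)\ge|X\setminus R|$, and the two reachability arguments (a circuit of $X+i$ in $\mathbf{M}(B)$ for $i\in R\setminus X$ stays inside $R$; a circuit of $X+i$ in $\mathbf{M}(A)$ for $i\notin R\cup X$ stays outside $R$) give $\rho_A(J)+\rho_B([m]\setminus J)=|X|$, and weak duality via Theorem~\ref{thm:MatroidIntersection} then certifies both conclusions simultaneously.

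Part (1) follows the right strategy, but two points need repair. (i) Shortestness does \emph{not} forbid ``any chord'': an arc from a later vertex of $P$ back to an earlier one creates no shorter $S_X$--$T_X$ path and cannot be excluded. What minimality gives you is exactly the absence of \emph{forward} chords, plus the fact that no interior vertex of $P$ lies in $S_X$ or $T_X$; and this is also all you need, since in the path order the exchange matrix is triangular with nonzero diagonal, so backward chords are harmless for uniqueness of the matching. (ii) There is an off-by-one in your matching claim: $|V(P)\setminus X|=|X\cap V(P)|+1$, so no perfect matching between these two sets exists. Writing $P=v_0,v_1,\dots,v_{2k}$ with $v_0\in S_X$ and $v_{2k}\in T_X$, the $\mathbf{M}(A)$-exchange pairs only match $\{v_1,v_3,\dots,v_{2k-1}\}$ with $\{v_2,v_4,\dots,v_{2k}\}$; the triangularity argument shows $X-\{v_1,\dots,v_{2k-1}\}+\{v_2,\dots,v_{2k}\}\in\mathcal{I}(A)$, but you must still append $v_0$. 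That requires the extra (standard) observation that this exchanged set spans the same subspace as the columns of $A[X]$ --- each $a_{v_{2i}}$ with $i\ge1$ lies in that span because $v_{2i}\notin S_X$ by shortestness --- so that $a_{v_0}\notin\mathrm{span}\,A[X]$ can be added without creating a dependence; symmetrically for $\mathbf{M}(B)$ with $v_{2k}\in T_X$. Relatedly, $A[X\triangle V(P)]$ is an $n\times(|X|+1)$ matrix, so the ``single nonzero Leibniz term'' should be applied to the square exchange (coefficient) matrix rather than to that submatrix. With these repairs the argument is correct.
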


Here $\triangle$ denotes the symmetric difference.
According to this lemma, 
Edmonds' algorithm is as follows: 
\begin{itemize}
	\item Find a shortest path $P$ in $G_X$ from $S_X$ to $T_X$ (by BFS).
	\item If it exists, then replace $X$ by $X \triangle V(P)$, and repeat.
Otherwise, $X$ is a common independent set of the maximum cardinality.
\end{itemize}
 
In our case, 
the auxiliary graph $G_X$ and optimal matrices $K,L$ in (2$'$) are naturally 
obtained by applying elementary row operation to matrices $A,B$ as follows. 
Since $X$ is a common independent set,
both $A[X]$ and $B[X]$ have column full rank $|X|$.
Therefore, by multiplying nonsingular matrices $K$ and $L$ to $A$ and $B$ from left, respectively,  
we can make $A$ and $B$ diagonal in the position $X$, that is, 
for some injective maps $\sigma _A, \sigma _B : X \to [n]$, it holds $(KA)_{\sigma _A (i) i} = (LB)_{\sigma _B (i) i} = 1$ for $i \in X$ and other elements are zero.
Incorporating permutation matrices in $K,L$, we can assume $\sigma _A = \sigma _B = \sigma$.
Such matrices $KA$ and $LB$ are said to be {\em $X$-diagonal}.
Notice that these operations do not change the matroids ${\bf M}(A)$ and ${\bf M}(B)$. 
See Figure \ref{fig:AB}, where the columns and rows are permuted appropriately.

\begin{figure}[t]
			\centering
			\includegraphics[width=12cm]{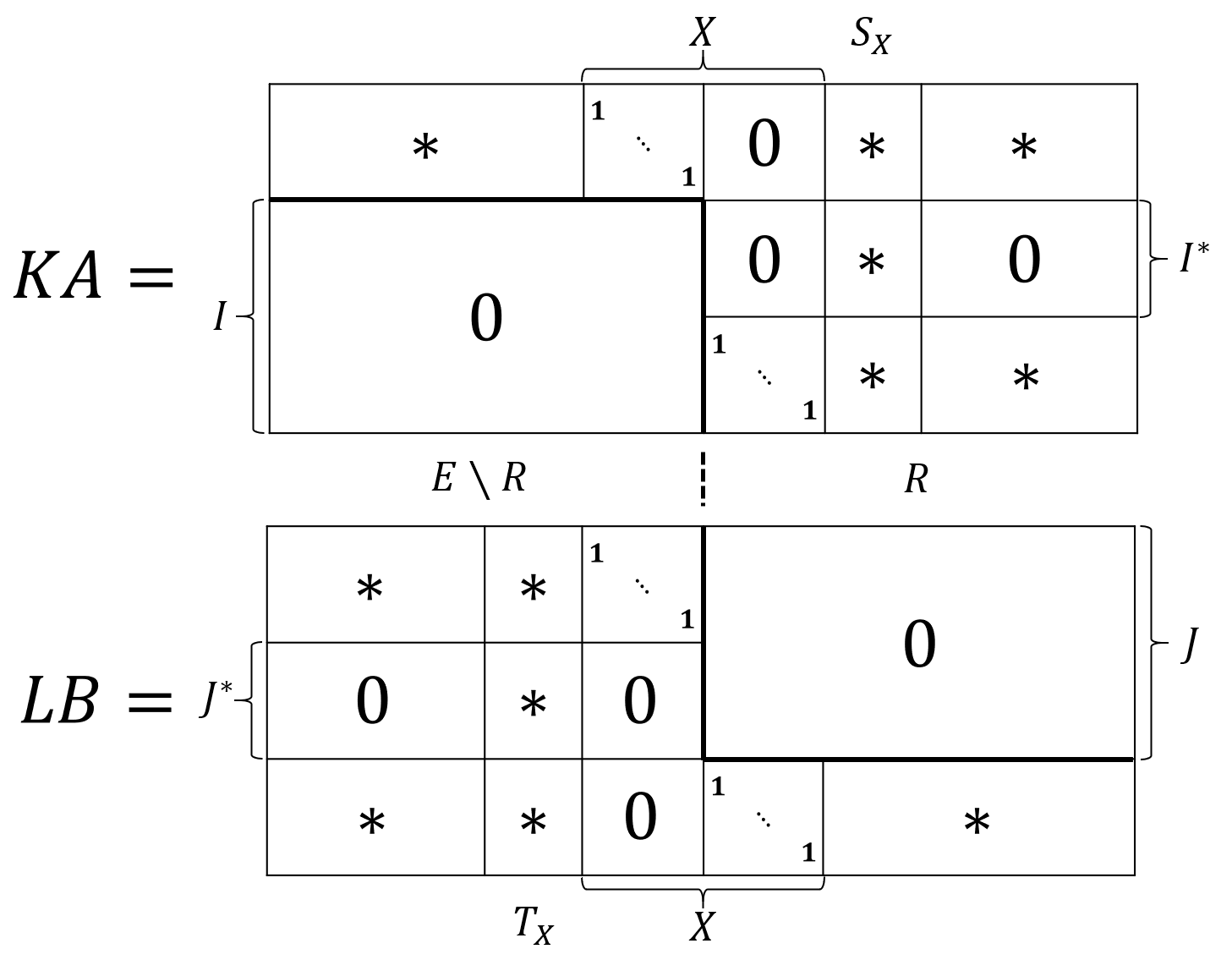}
			\caption{Matrices $A,B$ after elimination}
			\label{fig:AB}

		\begin{minipage}{0.5\hsize}
		\end{minipage}

			\centering
			\includegraphics[width=10cm]{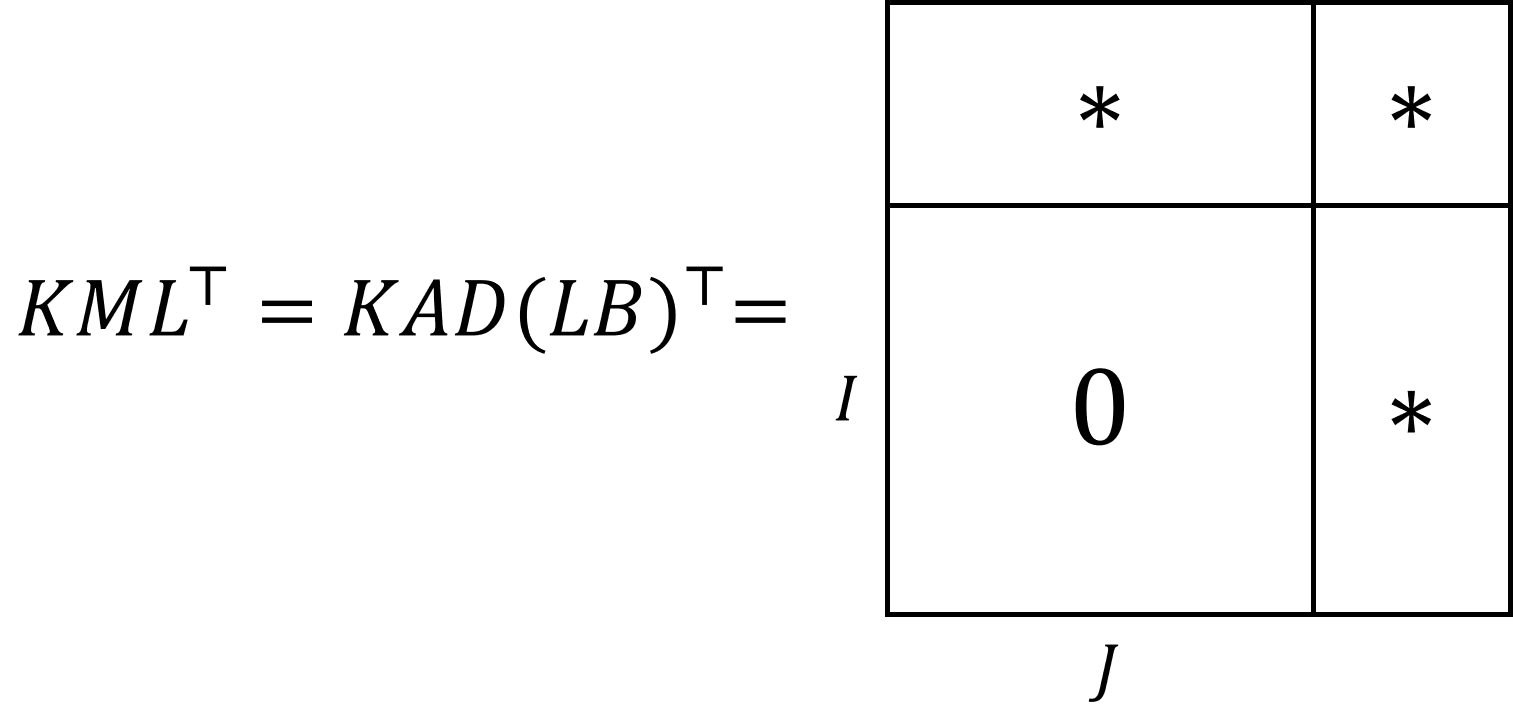}
			\caption{$KML^{\top}$ has zero submatrix $KML^{\top}[I,J]$, where $D$ is the diagonal matrix with diagonals $x_{1},x_{2},\dots,x_{n}$.}
			\label{fig:zeroblock}
\end{figure}

Then the auxiliary graph $G_X$ is constructed from the nonzero patterns of 
$KA$ and $LB$ as follows.
$S_X$ (resp. $T_X$) consists of nodes $i$
with $(KA)_{ki} \neq 0$ (resp. $(LB)_{ki} \neq 0$) for some $k \in [n] \setminus \sigma(X)$,
where $\sigma(X) = \{ j \in [n] \mid \exists i \in X, \sigma(i) = j \}$.
Additionally, for $i \in X$, arc $(i,j)$ (resp. $(j,i)$) exists if and only if 
$j \notin S_X$ and $(KA)_{\sigma(i)j} \neq 0$ (resp. $j \notin T_X$ and $(LB)_{\sigma(i)j}\neq 0$).

Moreover, 
in the case where $R \cap T_X = \emptyset$, 
the matrices $K,L^{\top}$ attain the maximum in (2$'$).
Indeed, define $I^{*}$, $J^{*}$, $I$ and $J$ by
\begin{eqnarray}
\label{eqn:IJ1}
I^{*} := [n] \setminus \sigma(X),\\
\label{eqn:IJ2}
J^{*} := [n] \setminus \sigma(X),\\
\label{eqn:IJ3}
I := \sigma(R \cap X) \cup I^{*},\\
\label{eqn:IJ4}
J := \sigma(X \setminus R) \cup J^{*}.
\end{eqnarray} 
Then the submatrix $(KML^{\top})[I,J]$ is an $(n - |X \setminus R|) \times (n - |R \cap X|)$ 
zero submatrix, where $|X| = 2n - (n - |X \setminus R| + n - |R \cap X|)$.
See Figure \ref{fig:zeroblock}.

\section{Algorithm}\label{sec:algorithm}

In this section, we consider the weighted linear matroid intersection problem.
In Section~\ref{subsec:formulation_weighted}, we formulate the problem as the computation of 
the degree of the determinant of 
a rational matrix associated with given two linear matroids and a weight. 
In Section~\ref{subsec:description}, we specialize {\bf Deg-Det} to present our algorithm for 
the weighted linear matroid intersection problem.
Its time complexity is analyzed in Section~\ref{subsec:analysis}, 
and its relation to Frank's algorithm is discussed in Section~\ref{subsec:Frank}.

\subsection{Algebraic formulation of weighted linear matroid intersection}\label{subsec:formulation_weighted}
Let $A,B$ be $n \times m$ matrices over $\QQ$ as in Section~\ref{subsec:formulation}, 
and let ${\bf M}(A)$ and ${\bf M}(B)$ be the associated linear matroids on $[m]$. 
We assume that both $A$ and $B$ have no zero columns.
In addition to $A,B$, 
we are further given integer weights 
$c_i \in \ZZ$ for $i \in [m]$.
The goal of the weighted linear matroid intersection problem is 
to maximize the weight $c(X) := \sum_{i \in X} c_i$ 
over all common independent sets $X$. 

Here we consider a restricted situation 
when the maximum is taken over all common independent sets of cardinality $n$.  
In this case, the maximum weight is interpreted as the degree of the determinant of the following $n \times n$ rational matrix $M$ defined by
\begin{equation*}
M := \sum_{i=1}^m  a_ib_i^{\top} x_i t^{c_i}.
\end{equation*}
\begin{Lem}
Suppose that $A$ and $B$ have row full rank.
The $\deg_t \det M$ is equal to the maximum of the weight $c(X)$ over all common independent sets $X$ of cardinality $n$. 
\end{Lem}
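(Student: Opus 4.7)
The plan is to reduce $\det M$ to an explicit sum via the Cauchy--Binet formula and then read off the top $t$-degree directly. Write $M = A D(t) B^\top$, where $D(t)$ is the $m \times m$ diagonal matrix with diagonal entries $x_i t^{c_i}$. Since $M$ is square of size $n$ and $A,B$ are $n \times m$, Cauchy--Binet applies cleanly.

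Applying Cauchy--Binet to $A D(t) B^\top$ (with the diagonal $D(t)$ absorbed into one factor), I would obtain
\[
\det M \;=\; \sum_{X \in \binom{[m]}{n}} \det A[X]\,\det B[X]\; t^{c(X)} \prod_{i \in X} x_i.
\]
The key observation is that the distinct $n$-subsets $X$ give distinct squarefree monomials $\prod_{i \in X} x_i$ in the commuting variables $x_1,\dots,x_m$, so no term in this sum cancels with another. Therefore $\det M$ is nonzero (as an element of $\QQ(t)[x_1,\dots,x_m]$) precisely when some $X$ of cardinality $n$ has $\det A[X] \det B[X] \neq 0$, that is, when $X$ is a common independent set of both ${\bf M}(A)$ and ${\bf M}(B)$; under the row-full-rank hypothesis, such a common independent set of size $n$ is exactly what we are optimizing over.

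Because there is no cancellation among the $X$-indexed summands, the $t$-degree of $\det M$ equals the maximum of the $t$-degrees of its individual nonzero terms. The $X$-term has $t$-degree exactly $c(X)$ whenever $X$ is a common independent set of cardinality $n$, and is zero otherwise. Hence $\deg_t \det M = \max\{ c(X) : X \text{ common independent set, } |X| = n\}$, matching the claim (with the convention $\deg_t 0 = -\infty$ covering the case where no such $X$ exists).

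No step looks truly delicate: the main thing to verify carefully is the non-cancellation claim, which is immediate from the fact that the $x_i$'s are commuting indeterminates and the monomials $\prod_{i \in X} x_i$ are in bijection with the subsets $X$. The only point to be cautious about is that the row-full-rank assumption on $A$ and $B$ is what guarantees that $n \leq m$ and that Cauchy--Binet is being applied in the correct regime (summing over $n$-subsets of $[m]$); otherwise, one might worry about whether any size-$n$ common independent set can exist.
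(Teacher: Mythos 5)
Your proposal is correct and follows essentially the same route as the paper: the paper's proof is precisely the Cauchy--Binet expansion $\det M = \sum_{|X|=n} \det A[X]\det B[X]\, t^{c(X)} \prod_{i\in X} x_i$ followed by reading off the maximal $t$-degree. Your explicit remark that the distinct squarefree monomials $\prod_{i\in X} x_i$ prevent cancellation is the (implicit) justification the paper relies on, so nothing is missing.
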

\begin{proof}
	As in the proof of Theorem~\ref{thm:MatroidIntersection}, by Binet-Cauchy formula applied to 
	$M$, we obtain 
	$\det M = \sum_{X \subseteq [m]: |X|=n}  \det A[X] \det B[X] t^{c(X)} \prod_{i \in X} x_i$, and
	\[
	\deg_t \det M = \max \{ c(X) \mid X \subseteq [m]: \det A[X] \det B[X] \neq 0\}.
	\]
\end{proof}	

\begin{Lem}[\cite{HH_degdet}]
	For the setting $M_i := a_ib_i^{\top} t^{c_i}$ $(i \in [m])$, 
	the algorithm {\bf Deg-Det} outputs $\deg_t \det M$.
\end{Lem}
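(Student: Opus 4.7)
The plan is to reduce correctness of {\bf Deg-Det} to Theorem~\ref{thm:MatroidIntersection} via the key structural observation that, in our setting, the matrix $(PMQ)^0$ always has matroid-intersection form
\[
(PMQ)^0 = \sum_{i=1}^m p_i q_i^\top x_i = M(A^0, B^0)
\]
for some $n \times m$ matrices $A^0 = (p_1\ p_2\ \cdots\ p_m)$ and $B^0 = (q_1\ q_2\ \cdots\ q_m)$ over $\QQ$. Given this, correctness of the output follows by applying Theorem~\ref{thm:MatroidIntersection} to $(PMQ)^0$ and invoking Lemma~\ref{lem:optimality}(2).

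To verify the claim, I would first note that since $x_1, \ldots, x_m$ are variables independent of $t$, properness of $PMQ = \sum_i (PM_iQ) x_i$ is equivalent to properness of each $PM_iQ$ separately, and $(PMQ)^0 = \sum_i (PM_iQ)^0 x_i$. Next, since $PM_iQ = P(a_ib_i^\top)Q \cdot t^{c_i}$ has rank at most one as a matrix over $\QQ(t)$ (because $a_ib_i^\top$ does), lower semicontinuity of rank (equivalently, closedness of the locus of rank-$\le 1$ matrices) implies that its constant term $(PM_iQ)^0 = \lim_{t \to \infty}(PM_iQ)(t)$ is also of rank at most one, and hence factors as $p_iq_i^\top$ for some $p_i, q_i \in \QQ^n$.

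With this structure in hand, I would apply the equivalence $(1')=(2')$ of Theorem~\ref{thm:MatroidIntersection} to $M(A^0, B^0)$: the optimal value of problem (P$^0$) applied to $(PMQ)^0$ equals $2n - \rank M(A^0, B^0)$. Thus termination at step 2 with $r+s \le n$ forces $\rank M(A^0, B^0) \ge n$, so $(PMQ)^0$ is nonsingular, and Lemma~\ref{lem:optimality}(2) yields $D^* = \deg_t \det M$. Termination itself follows because each step 3 strictly decreases the integer $D^*$ by $\kappa(r+s-n) \ge 1$: when $\det M \not\equiv 0$, the bound $D^* \ge \deg_t \det M$ (since (P) upper-bounds $\deg_t \det M$) guarantees finitely many iterations; when $\det M \equiv 0$, step 2 can never succeed (success there would make $(PMQ)^0$ nonsingular, and hence $\det M \not\equiv 0$), so the $\kappa$-unbounded branch is eventually reached and outputs $-\infty = \deg_t \det M$.

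The main obstacle is the rank-one claim for $(PM_iQ)^0$: neither $P$ nor $Q$ preserves any particular rank-one factorization of $M_i$ by itself, so a direct computation of $p_i$ and $q_i$ from $P, Q, a_i, b_i, c_i$ would be awkward. The semicontinuity argument sidesteps this difficulty entirely, and once the claim is established the rest is a straightforward application of the already-proved Theorem~\ref{thm:MatroidIntersection} together with Lemma~\ref{lem:optimality}.
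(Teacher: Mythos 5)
Your proposal is correct and follows the paper's strategy at the top level: show that $(PMQ)^0$ keeps the matroid-intersection form $\sum_i a_i^0 {b_i^0}^{\top} x_i$, apply Theorem~\ref{thm:MatroidIntersection} to identify termination of step 2 with nonsingularity of $(PMQ)^0$, and conclude via Lemma~\ref{lem:optimality}(2). Where you genuinely diverge is the justification of the rank-one claim for $(PM_iQ)^0$. The paper points to the explicit entrywise formulas (\ref{eqn:a_i^0})--(\ref{eqn:b_i^0}), which are stated only for diagonal $P=(t^{\alpha})$, $Q=(t^{\beta})$ (the general $P,Q$ produced by the algorithm reduce to this form only via the commutation relation (\ref{eqn:commute}) established later). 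Your semicontinuity argument --- the $2\times2$ minors of the proper matrix $PM_iQ$ vanish identically, hence so do those of its leading term $(PM_iQ)^0$ --- is valid for arbitrary nonsingular $P,Q\in\QQ(t)^{n\times n}$ and is cleaner and more general; what it does not give you are the explicit vectors $a_i^0,b_i^0$ that the specialized algorithm of Section~\ref{subsec:description} actually computes with. One caveat: your termination argument for the case $\det M\equiv 0$ (``the $\kappa$-unbounded branch is eventually reached'') is not justified --- $D^*$ could a priori decrease forever with $\kappa$ finite at each step, since the zero submatrix of $K(PMQ)^0L$ need not be a zero submatrix of $KPMQL$; this is exactly why {\bf Deg-Det-WMI} adds the explicit cutoff $-\sum_i(\alpha_i+\beta_i)<n\min_i c_i$. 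The paper's own proof sidesteps termination entirely (``if the algorithm terminates, then \ldots''), so this extra step is a bonus with a gap rather than a flaw in the part that parallels the paper.
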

\begin{proof}
	Consider step 2 of 
	{\bf Deg-Det}. Here $(PM_iQ)^0$ is written as $a_i^0{b_i^0}^{\top}$ for some $a_i^0,b_i^0 \in \QQ^n$; see (\ref{eqn:a_i^0}) and (\ref{eqn:b_i^0}) in the next subsection.
	In particular, $(PMQ)^0 = \sum_{i=1}^{m} a_{i}^{0}{b_{i}^{0}}^{\top} x_i$.
	Therefore, by Theorem~\ref{thm:MatroidIntersection}, $(PMQ)^0$ is nonsingular if and only if  the optimal value $r+s$ of (${\rm P}^0$) is at most $n$.
	Thus, if the algorithm terminates, then $(PMQ)^0$ is nonsingular 
	and $D^* = \deg_t \det M$ by Lemma~\ref{lem:optimality}.    
\end{proof}

\subsection{Algorithm description}\label{subsec:description}

Here we present our algorithm by specializing {\bf Deg-Det}.
The basic idea is to apply Edmonds' algorithm to solve the problem (P$^0$)
for $(PMQ)^0 = \sum_{i=1}^{m} (PM_iQ)^0 x_i$, where $PMQ$ is proper.
We first consider the case where 
$P$ and $Q$ are diagonal matrices represented as $P =(t^{\alpha})$ and $Q = (t^\beta)$ for some $\alpha,\beta \in \ZZ^n$.
In this case, $(PMQ)^0$ is explicitly written as follows.  
Observe that the properness of $PMQ$ is equivalent to 
\begin{equation}\label{eqn:properness}
\alpha_{k} + \beta_{\ell} + c_i \leq 0 \quad (i \in [m],k,\ell \in[n]: (a_i)_k(b_i)_{\ell} \neq 0).
\end{equation}
For $i\in [m]$, define $a_i^{0},b_i^{0} \in \QQ^n$ by
\begin{eqnarray}
\label{eqn:a_i^0}
&& (a_i^0)_k := \left\{
\begin{array}{ll}
(a_i)_k & {\rm if}\ \exists \ell \in [n], (a_i)_k(b_i)_{\ell} \neq 0, \alpha_k + \beta_{\ell} + c_i = 0, \\
0 & {\rm otherwise},
\end{array}
\right. \\
\label{eqn:b_i^0}
&& (b_i^0)_\ell := \left\{
\begin{array}{ll}
(b_i)_\ell & {\rm if}\ \exists k \in [n], (a_i)_k(b_i)_{\ell} \neq 0, \alpha_k + \beta_{\ell} + c_i = 0, \\
0 & {\rm otherwise}.
\end{array}
\right.
\end{eqnarray}
Then $(PM_iQ)^0 = a_i^0{b_i^0}^{\top}$. Namely we have 
\begin{equation*}
(PMQ)^0 = \sum_{i=1}^m a_i^0 {b_i^0}^\top x_i.
\end{equation*}
Therefore the step 1 of {\bf Deg-Det} can be executed by solving 
the unweighted linear matroid intersection problem 
for two matroids ${\bf M}(A^0)$ and ${\bf M}(B^0)$, 
where the matrices $A^0, B^0$ are defined by
\begin{equation*}
A^0 := (a_1^0\ a_2^0\ \cdots \ a_m^0),\ B^0 := (b_1^0\ b_2^0\ \cdots \ b_m^0).
\end{equation*}
The matrices $A^0,B^0$ have the following structure.
\begin{Lem}
\label{lem:weightA_0}
If $(a_i^0)_k \neq 0$ and $\alpha_{k'} = \alpha_k$,
then $(a_i^0)_{k'} = (a_i)_{k'}$.
If $(a_i^0)_k \neq 0$ and $\alpha_{k'} > \alpha_k$,
then $(a_i^0)_{k'} = (a_i)_{k'} = 0$.
The same properties holds for $B^0$ with $\beta$.
\end{Lem}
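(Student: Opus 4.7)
The plan is a direct unpacking of the definitions (3.3)--(3.4) together with the properness inequality (3.2). Fix $i\in[m]$ and an index $k$ with $(a_i^0)_k\neq 0$. By (3.3) there is a witness $\ell\in[n]$ such that $(a_i)_k(b_i)_\ell\neq 0$ and $\alpha_k+\beta_\ell+c_i=0$; I would carry this $\ell$ through both cases.

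For the case $\alpha_{k'}=\alpha_k$, I would plug $\ell$ into the defining condition at index $k'$: since $\alpha_{k'}+\beta_\ell+c_i=\alpha_k+\beta_\ell+c_i=0$ and $(b_i)_\ell\neq 0$, the two subcases $(a_i)_{k'}\neq 0$ and $(a_i)_{k'}=0$ both give $(a_i^0)_{k'}=(a_i)_{k'}$ directly from (3.3). For the case $\alpha_{k'}>\alpha_k$, I would argue by contradiction: if $(a_i)_{k'}\neq 0$, then $(a_i)_{k'}(b_i)_\ell\neq 0$, so the properness condition (3.2) forces $\alpha_{k'}+\beta_\ell+c_i\le 0$, which contradicts $\alpha_{k'}>\alpha_k$ combined with $\alpha_k+\beta_\ell+c_i=0$. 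Hence $(a_i)_{k'}=0$, and (3.3) then gives $(a_i^0)_{k'}=0$ as well.

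The statement for $B^0$ with $\beta$ in place of $\alpha$ follows by the same argument with the roles of $A,B$ (and of $k,\ell$) swapped, using (3.4) in place of (3.3). There is essentially no obstacle here: the only point requiring care is to produce and preserve the witness $\ell$ from the hypothesis $(a_i^0)_k\neq 0$, and to observe that the properness inequality (3.2) is the exact tool that rules out a nonzero entry at a row with strictly larger $\alpha$-value.
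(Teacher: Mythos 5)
Your proof is correct and follows essentially the same route as the paper: the first claim by directly reusing the witness $\ell$ in definition (\ref{eqn:a_i^0}), and the second by contradiction against the properness condition (\ref{eqn:properness}). If anything, your version is slightly more careful than the paper's, since you derive the stronger conclusion $(a_i)_{k'}=0$ directly (the paper's contradiction is phrased only in terms of $(a_i^0)_{k'}\neq 0$), but the underlying argument is identical.
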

\begin{proof}
The former claim is immediate from the definition (\ref{eqn:a_i^0}).
For the latter claim, suppose to the contrary that $(a_i^0)_k$ and $(a_i^0)_{k'}$ are nonzero and $\alpha_{k'} > \alpha_{k}$.
Then for some $\ell, \ell'$, $(b_i^0)_\ell$ and $(b_i^0)_{\ell'}$ are nonzero with
$\alpha_k + \beta_{\ell} + c_i = \alpha_{k'} + \beta_{\ell'} + c_i = 0$
by the definition (\ref{eqn:a_i^0}).
Then, $(a_i)_{k'}(b_i)_{\ell} \neq 0$ and $\alpha_{k'} + \beta_{\ell} + c_i > \alpha_{k} + \beta_{\ell} + c_i = 0$.
This contradicts (\ref{eqn:properness}).
\end{proof}

Suppose that we are given a common independent set $X$ of ${\bf M}(A^0)$ and ${\bf M}(B^0)$.
According to Edmonds' algorithm (given after Lemma~\ref{lem:optimality_matroidintersection}), 
construct the residual graph $G_X^0 := G_X(A^0,B^0)$ 
with node sets $S_X^0 := S_X(A^0)$ and $T_X^0 := T_X(B^0)$.
Then we can increase $X$ or 
obtain $K,L$ that are optimal to the problem (P$^0$) (as was explained in the end of Section~\ref{subsec:formulation}). 

A key observation here is that $K$ and $L$ are
commuted with $(t^{\alpha})$ and $(t^{\beta})$, respectively:
\begin{equation}\label{eqn:commute}
K (t^{\alpha}) = (t^{\alpha}) K,\ L (t^{\beta}) = (t^{\beta}) L.
\end{equation}
Indeed, by Lemma~\ref{lem:weightA_0}, if $(a_i^0)_k$ and $(a_i^0)_{k'}$ are nonzero, then $\alpha_k = \alpha_{k'}$ holds.
Therefore, each elementary row operation for $A^0$ is done
between rows $k,k'$ with $\alpha_k = \alpha_{k'}$.
Consequently, the elimination matrix $K$ is a block diagonal matrix
in which the rows (columns) $k,k'$ in the same block have the same
$\alpha_k= \alpha_{k'}$. Then we can see the commutation
(\ref{eqn:commute}) as
\begin{eqnarray}
\label{eqn:Kt^alpha}
K(t^{\alpha}) &=& \left(
\begin{array}{cccc}
K_1 &      &  &   \\
    & K_2  &  &    \\
    &      & \ddots &    \\
    &      &  & K_k
\end{array}\right)
 \left(
 \begin{array}{cccc}
  t^{\bar \alpha_1}I &      &  &   \\
  & t^{\bar \alpha_2} I  &  &    \\
  &      & \ddots &    \\
  &      &  & t^{\bar \alpha_k} I
 \end{array}\right) \nonumber \\
& = & \left(
\begin{array}{cccc}
t^{\bar \alpha_1}I &      &  &   \\
& t^{\bar \alpha_2} I  &  &    \\
&      & \ddots &    \\
&      &  & t^{\bar \alpha_k} I
\end{array}\right)
\left(
\begin{array}{cccc}
K_1 &      &  &   \\
& K_2  &  &    \\
&      & \ddots &    \\
&      &  & K_k
\end{array}\right) = (t^{\alpha}) K,
\end{eqnarray}
where $\bar \alpha_1, \bar \alpha_2,\ldots, \bar \alpha_k$
are distinct values of $\alpha_1, \alpha_2,\ldots, \alpha_n$.
%

%
%
Therefore the update in step 3 of {\bf Deg-Det}
is done as $P \leftarrow (t^{\alpha + \kappa{\bf 1}_{I}})K$,
$Q \leftarrow L(t^{\beta -\kappa {\bf 1}_{[n]\setminus  {J}}})$.
Instead of doing such update,
we update $A,B$ as $A \leftarrow KA$,
$B \leftarrow L^{\top}B$,
which keeps $\deg_t \det M$,
and update $\alpha,\beta$
as $\alpha \leftarrow \alpha + \kappa{\bf 1}_{I}$,
$\beta \leftarrow \beta - \kappa {\bf 1}_{[n]\setminus  {J}}$.
Then $P,Q$ are always of the form $(t^\alpha), (t^{\beta})$,
and can be treated as exponent vectors $\alpha,\beta$,
where $- \deg_t \det P - \deg_t \det Q = - \sum_{i=1}^n(\alpha_i+ \beta_i)$.
Now the algorithm is written, without explicit references
to $P,Q,K,L$, as follows.
\begin{description}
	\item [Algorithm: Deg-Det-WMI]
	\item [Input:] $n \times m$ matrices $A = (a_1\ a_2\ \cdots \ a_m)$, $B = (b_1\ b_2\ \cdots \ b_m)$, 
	and weights $c_i \in \ZZ$ $(i=1,2,\ldots,m)$.
	\item [Output:] $\deg_t \det M$ for $M := \sum_{i=1}^m a_i b_i^{\top} x_i t^{c_i}$.
	\item [0:] $X = \emptyset$, 
	$\alpha := - \max_{i}{c_{i}}{\bf 1}$ and $\beta := {\bf 0}$.
	\item [1:] If $|X| = n$, then output $- \sum_{i=1}^n (\alpha_i + \beta_i)$ and stop. 
	Otherwise, according to (\ref{eqn:a_i^0}), (\ref{eqn:b_i^0}), decompose $A,B$ as $A = A^0 + A'$, $B = B^0 + B'$. 
	Apply elementary row operations to $A,B$ so that $A^0,B^0$ are $X$-diagonal forms.
	\item [2:] From $A^0,B^0$, 
	construct the residual graph $G^0_X$ and node sets $S^0_X,T^0_X$.
	Let $R^{0}$ be the set of nodes reachable from $S^0_X$ in $G^0_X$. 
	\begin{description}
		\item[2-1. {\rm If $R^{0} \cap T^0_X \neq \emptyset$:}] 	
		
		Taking a shortest path $P$ from $S^0_X$ to $T^0_X$, let $X \leftarrow X \triangle V(P)$, and go to step 1.
		\item[2-2. {\rm If $R^{0} \cap T^0_X = \emptyset$:}] 
		Then $R^0$ determines the zero submatrix $((t^{\alpha}) M (t^{\beta}))^0[I,J]$ of maximum size $|I| + |J| (> n)$ by (\ref{eqn:IJ3}) and (\ref{eqn:IJ4}); see also Figures \ref{fig:AB} and \ref{fig:zeroblock}.
		Letting $\alpha \leftarrow \alpha + \kappa {\bf 1}_{I}$, $\beta \leftarrow \beta - \kappa {\bf 1}_{[n] \setminus J}$,
		increase $\kappa$ from $0$ until 
		a nonzero entry appears in the zero submatrix.
		If $\kappa = \infty$ or $-\sum_{i=1}^{n}(\alpha_{i}+\beta_{i}) < n\min_{i}{c_{i}}$, then output $-\infty$ and stop.
		Otherwise go to step 1. 
	\end{description} 
\end{description}
The step 2 in this algorithm is
essentially Edmonds' algorithm
to solve the unweighted matroid intersection problem
for two matroids ${\bf M}(A^0)$, ${\bf M}(B^0)$ and an initial
common independent set $X$. It turns out below that $X$ is
actually commonly independent for ${\bf M}(A^0)$ and ${\bf M}(B^0)$.
Assuming this,
it is clear that, in step 2-1, $X$ increases and is a common
independent set in the next step 1,
and that, in step 2-2, $X$ is a maximum common independent set
and a maximum-size zero submatrix of $((t^{\alpha})M(t^{\beta}))^0 =
\sum_{i=1}^m a_i^0 {b_i^0}^{\top} x_i$ is obtained accordingly.
After the update of $\alpha,\beta$,
$A^0$ and $B^0$ are changed so that
$A^0[[n] \setminus I, R^0]$ and $B^0[[n] \setminus J, E \setminus R^0]$
become zero blocks, and $A^0[I, E \setminus R^0]$ or $B^0[J, R^0]$ has
nonzero entries; see Figure~\ref{fig:AB'} in Section~\ref{subsec:analysis}. Other
parts are unchanged.
In particular, both $A^0[X]$ and $B^0[X]$
are lower triangular matrices (by row/column permutations).
Therefore $X$ keeps commonly independent
for new matroids ${\bf M}(A^0)$ and ${\bf M}(B^0)$ in the next step 1.
If $|X| = n$, then this is in the situation where $(PMQ)^0$ is
nonsingular,  and hence
the algorithm correctly outputs $\deg_t \det M$ as $- \deg_t \det P - \deg_t
\det Q = - \sum_{i=1}^n (\alpha_i+\beta_i)$.
If the singularity of $M$ is detected, e.g., $\deg_t \det M < n \min_{i
\in [m]} c_i$,
then it outputs $-\infty$.

Moreover, $X$ is always a common independent set of ${\bf M}(A)$
and ${\bf M}(B)$ having the maximum weight
among all common independent sets of cardinality $|X|$.
Therefore {\bf Deg-Det-WMI} can obtain a maximum weight
independent set (of arbitrary cardinality) by adding the following
procedure.
\begin{itemize}
\item After the update of $X$ in step 2-1, for $k=|X|$,
output $X_k :=X$ as a maximum weight common independent set of cardinality $k$
for ${\bf M}(A)$ and ${\bf M}(B)$.
\item After the termination of the algorithm,
output $X^*$ from $X_0,X_1,\ldots,X_n$ having the maximum weight
$c(X_i)$, where $X_0 := \emptyset$ and $c(X_k) := -\infty$ if $X_k$ is
undefined. Then $X^*$ is a maximum weight common independent set for ${\bf M}(A)$ and ${\bf M}(B)$.
\end{itemize}
We show this fact by using the idea of {\em weight splitting}~\cite{Frank81}.
%
%
\begin{Lem}\label{lem:independent}
	In step 1, define weight splitting $c_i = c_i^1 + c_i^2$ for each $i \in [m]$ by
	\begin{eqnarray}\label{eqn:c_i^1}
	c_i^1 & := & c_i - c_i^2,\\
	\label{eqn:c_i^2}
	c_i^2 & := & - \max \{ \beta_\ell \mid \ell \in [n]: (b_i)_\ell \neq 0 \}.
	\end{eqnarray}
	Then $X$ is a common independent set of ${\bf M}(A)$ 
	and ${\bf M}(B)$ such that $c^{1}(X) = \max \{ c^1(Y) \mid Y \in {\cal I}(A), |Y|  = |X|\}$
	and  $c^{2}(X) = \max \{ c^2(Y) \mid Y \in {\cal I}(B), |Y| = |X|\}$. Thus $X$  
	maximizes the weight $c(X)$ over all common independent sets of cardinality $|X|$.
\end{Lem}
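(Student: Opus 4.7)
The plan is the weight-splitting strategy of Frank. It reduces to establishing (a) $X\in\mathcal{I}(A)\cap\mathcal{I}(B)$, (b) $c^1(X)\geq c^1(Y)$ for every $Y\in\mathcal{I}(A)$ with $|Y|=|X|$, and (c) the symmetric statement for $c^2$ and $\mathcal{I}(B)$. The additivity $c=c^1+c^2$ then yields $c(X)\geq c(Y)$ for every common independent $Y$ of cardinality $|X|$.

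The first preparation is to record explicit formulas on $X$: the $X$-diagonality gives $(a_i^0)_{\sigma(i)}(b_i^0)_{\sigma(i)}\neq 0$ for $i\in X$, so (\ref{eqn:a_i^0})--(\ref{eqn:b_i^0}) force $\alpha_{\sigma(i)}+\beta_{\sigma(i)}+c_i=0$; combined with properness (\ref{eqn:properness}), the maximum in (\ref{eqn:c_i^2}) is attained at $\ell=\sigma(i)$, giving $c_i^1=-\alpha_{\sigma(i)}$ and $c_i^2=-\beta_{\sigma(i)}$ for $i\in X$. Maximizing $\beta_\ell$ in (\ref{eqn:properness}) also yields the pointwise bound $c_i^1\leq -\alpha_k$ for any $(a_i)_k\neq 0$. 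For (a), the algorithm maintains $X\in\mathcal{I}(A^0)\cap\mathcal{I}(B^0)$, so it suffices to prove $\mathcal{I}(A^0)\subseteq\mathcal{I}(A)$: a hypothetical dependence $\sum_{i\in X}\lambda_i a_i=0$ with $\lambda\neq 0$, restricted to rows $\{k:\alpha_k=\alpha^*\}$ for $\alpha^*:=\max\{\alpha_{\sigma(i)}:\lambda_i\neq 0\}$, makes the columns $a_i$ with $\alpha_{\sigma(i)}<\alpha^*$ vanish on those rows and the remaining $a_i$ coincide with their $a_i^0$-parts (which have support only on those rows) by Lemma~\ref{lem:weightA_0}; this produces a nontrivial dependence within $\{a_i^0\}_{i\in X}$, contradicting $X\in\mathcal{I}(A^0)$.

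For (b), I would use the matroid exchange theorem to produce, for any $Y\in\mathcal{I}(A)$ with $|Y|=|X|$, a bijection $\phi:X\setminus Y\to Y\setminus X$ with $X-i+\phi(i)\in\mathcal{I}(A)$; then $c^1(X)-c^1(Y)=\sum_{i\in X\setminus Y}(c_i^1-c_{\phi(i)}^1)\geq 0$ would follow from the exchange inequality $c_i^1\geq c_j^1$ for every $i\in X$ and $j\notin X$ with $X-i+j\in\mathcal{I}(A)$. Via the formulas just derived and the general bound $c_j^1\leq -\alpha_j^{*}$ (with $\alpha_j^{*}:=\max\{\alpha_k:(a_j)_k\neq 0\}$), this exchange inequality reduces to the potential inequality $\alpha_{\sigma(i)}\leq\alpha_j^{*}$. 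The main obstacle is proving this potential inequality. My plan is to install it as an invariant maintained at every entry to step~1: it is vacuous at $X=\emptyset$; the augmentation in step~2-1 preserves it by the shortest-path / no-negative-alternating-cycle property of $P$ in $G^0_X$; and the potential shift in step~2-2 preserves it because $I$ and $[n]\setminus J$ are chosen from the reachable set $R^0$ so that any potential violation of $\alpha_{\sigma(i)}\leq\alpha_j^{*}$ after the shift must correspond to a newly tight entry in the updated $A^0$, bounding $\kappa$ and triggering a return to step~1 before a genuine violation can occur. Part (c) is entirely symmetric under $(A,\alpha,c^1)\leftrightarrow(B,\beta,c^2)$.
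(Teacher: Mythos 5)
Your part (a) and your reduction of optimality to the single-exchange inequality are sound and essentially match the paper: the preliminary identities $c_i^1=-\alpha_{\sigma(i)}$, $c_i^2=-\beta_{\sigma(i)}$ for $i\in X$ and the bound $c_j^1\leq-\alpha_k$ for $(a_j)_k\neq 0$ are exactly the paper's (\ref{eqn:c^1_i})--(\ref{eqn:c^1_ic^2_i=}), and your ``leading $\alpha$-block'' argument for $X\in{\cal I}(A^0)\Rightarrow X\in{\cal I}(A)$ is a valid variant of the paper's observation that, after sorting by $\alpha$, Lemma~\ref{lem:weightA_0} makes $A[X]$ lower triangular with nonzero diagonal.

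The genuine gap is in your proof of the potential inequality $\alpha_{\sigma(i)}\leq\alpha_j^*$. You propose to maintain it as an algorithmic invariant, but the two inductive steps you invoke are precisely the hard content and are only asserted: (i) that augmentation along a shortest path in $G_X^0$ preserves the exchange inequalities is the central lemma in correctness proofs of Frank's algorithm, and it is stated for the tight subgraph $\bar G_X$ of the exchange graph of ${\bf M}(A),{\bf M}(B)$ --- whereas your path lives in $G_X^0$, the exchange graph of the \emph{different} matroids ${\bf M}(A^0),{\bf M}(B^0)$. Identifying these two graphs is Proposition~3.7 of the paper, whose proof itself relies on the present lemma, so your route risks circularity unless you establish that identification independently first; (ii) your step~2-2 argument conflates tightness of entries of $A$ alone with tightness of entries of $M$ (the stopping rule for $\kappa$ is governed by $c_i+\alpha_k+\beta_\ell$ for $(a_i)_k(b_i)_\ell\neq 0$, not by $\alpha_j^*$), so ``a violation triggers a return to step~1'' does not follow as stated. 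The paper avoids induction on exchange pairs altogether: the only inductive ingredient is the easy monotonicity fact (\ref{eqn:I*J*}) that every row in $I^*$ attains $\max_k\alpha_k$; then the exchange inequality is proved directly by two cases --- if $a_j$ has support in $I^*$ then $\alpha_j^*\geq\max_k\alpha_k\geq\alpha_{\sigma(i)}$ immediately, and otherwise $a_j$ is supported in $\sigma(X)$, the lower-triangularity of $A[X]$ forces any $i$ exchangeable with $j$ to satisfy $\sigma(i)\geq k$ for the smallest support row $k$ of $a_j$, whence $\alpha_{\sigma(i)}\leq\alpha_k=\alpha_j^*$. I recommend replacing your invariant argument by this direct case analysis.
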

\begin{proof}
We first verify that $X$ is a common independent set of ${\bf M}(A)$ and ${\bf M}(B)$.
We may assume $X = \{1,2,\ldots,h\}$.
Since $X$ is commonly independent of ${\bf M}(A^0)$ and ${\bf M}(B^0)$,
we can assume that $A^0[[h],X] = B^0[[h],X] = I$ in the $X$-diagonal forms.
Then $I^* = J^* = \{h+1,\ldots,n\}$; recall (\ref{eqn:IJ1}) and (\ref{eqn:IJ2}).
We can further assume that $\alpha_1 \geq \alpha_2 \geq \cdots \geq \alpha_h$ and $\beta_1 \geq \beta_2 \geq \cdots \geq \beta_h$.
By Lemma~\ref{lem:weightA_0}, $A[X]$ and $B[X]$ are lower-triangular matrices with nonzero diagonals.
Hence $X$ is commonly independent for ${\bf M}(A)$ and ${\bf M}(B)$.

Next 
we make some observations to prove the statement. 	
Observe from the definition (\ref{eqn:a_i^0}) (\ref{eqn:b_i^0}) (\ref{eqn:c_i^1}) (\ref{eqn:c_i^2}) and the properness (\ref{eqn:properness}) that
\begin{eqnarray}
c^1_i &\leq & - \alpha_k \quad ( \forall k: (a_i)_k \neq 0), \label{eqn:c^1_i} \\ 
c^2_i &\leq & - \beta_\ell \quad (\forall \ell: (b_i)_\ell \neq 0), \label{eqn:c^2_i}
\end{eqnarray}
and 
\begin{equation}\label{eqn:c^1_ic^2_i=}
c_i^1  = - \alpha_k, \quad c_i^2 = - \beta_{\ell} \quad (\forall k,\ell: (a_i^0)_k (b_i^0)_{\ell} \neq 0).
\end{equation}
We also observe
\begin{equation}\label{eqn:I*J*}
\max_{k \in [n]} \alpha_k = \alpha_{k'} \, (\forall k' \in I^*), \quad  \max_{\ell \in [n]} \beta_\ell = \beta_{\ell'} \, (\forall \ell' \in J^*).
\end{equation}
This follows from  
the way of update $\alpha \leftarrow \alpha + {\bf 1}_{I}$, 
$\beta \leftarrow \beta - {\bf 1}_{[n] \setminus J}$ with the initialization $\alpha = -\max_{i} c_{i} {\bf 1}$, 
$\beta = {\bf 0}$ of the algorithm, and the fact
that both $I^* \subseteq I$ and $J^* \subseteq J$ 
monotonically decrease.

Finally we prove that $X$ maximizes both weights $c^1$ and $c^2$ for ${\bf M}(A)$ and ${\bf M}(B)$, respectively. 
It suffices to show
\begin{eqnarray}
c^1(X) &\geq & c^1(X \cup \{i\} \setminus \{j\}) \quad (i \not \in X,j \in X: X \cup \{i\} \setminus \{j\} \in {\cal I}(A)), \label{eqn:opt_matroid1} \\
c^2(X) &\geq & c^2(X \cup \{i\} \setminus \{j\}) \quad (i \not \in X,j \in X: X \cup \{i\} \setminus \{j\} \in {\cal I}(B)). \label{eqn:opt_matroid2}
\end{eqnarray}
Indeed, this is the well-known optimality criterion of 
the maximum weight independent set problem on a matroid.
Take $i,j$ with $X \cup \{i\} \setminus \{j\} \in {\cal I}(A)$.  
If there is a nonzero element $(a_{i})_{k^*}\neq 0$ for some $k^* \in I^*$, 
then by (\ref{eqn:c^1_i}) and (\ref{eqn:I*J*})
it holds $c_i^1 \leq -\alpha_{k^*}  \leq -\alpha_j = c_j^1$, 
where the equality follows from (\ref{eqn:c^1_ic^2_i=}) and $(a_j^0)_j = 1$,
and thus (\ref{eqn:opt_matroid1}) holds. 
Suppose not.
Let $k \in [h]$ be the smallest index such that $(a_i)_k \neq 0$. 
Then $c^1_i \leq - \alpha_k$.  
Now $A[[h],X]$ is lower triangular.
Additionally, by Lemma~\ref{lem:weightA_0} and (\ref{eqn:I*J*}), $A[I^*,X] = A^0[I^*,X]$ 
is a zero matrix. 
Therefore,  
it must hold $j \geq k$ 
for $i,j$ to belong to a circuit in $X \cup \{i\}$. 
Hence, $c^1_j = - \alpha_j \geq - \alpha_k \geq c^1_i$. 
Thus (\ref{eqn:opt_matroid1}) holds.
(\ref{eqn:opt_matroid2}) is similarly shown.
\end{proof}

\subsection{Analysis}\label{subsec:analysis}
We analyze the time complexity of {\bf Deg-Det-WMI}.
It is obvious that if $R^{0} \cap T^0_X \neq \emptyset$ (step 2-1) occurs, 
then $X$ increases and hence the rank of $((t^{\alpha})M(t^{\beta}))^0$ increases.
Therefore the algorithm goes to step 2-1 at most $n$ times. 
The main analysis concerns step 2-2, particularly, how   
nonzero entries appear, 
how they affect $A^0$, $B^0$, and $G _X ^0$,  
and how many times these scenarios 
occur until $R^0 \cap T_X^0 \neq \emptyset$.

As $\kappa$ becomes positive, 
the submatrix $((t^{\alpha})M(t^{\beta}))^0[[n] \setminus I, [n] \setminus  J]$ becomes a zero block, since the degree of each element of $(t^{\alpha})M(t^{\beta})[[n] \setminus I, [n] \setminus  J]$ decreases. 
Accordingly,   
$A^0[[n] \setminus I, R^0]$ and
$B^0[[n] \setminus  J, E \setminus R^0]$ become zero blocks;
see Figure \ref{fig:AB'}.
\begin{figure}[t]
	\centering
	\includegraphics[width=12cm]{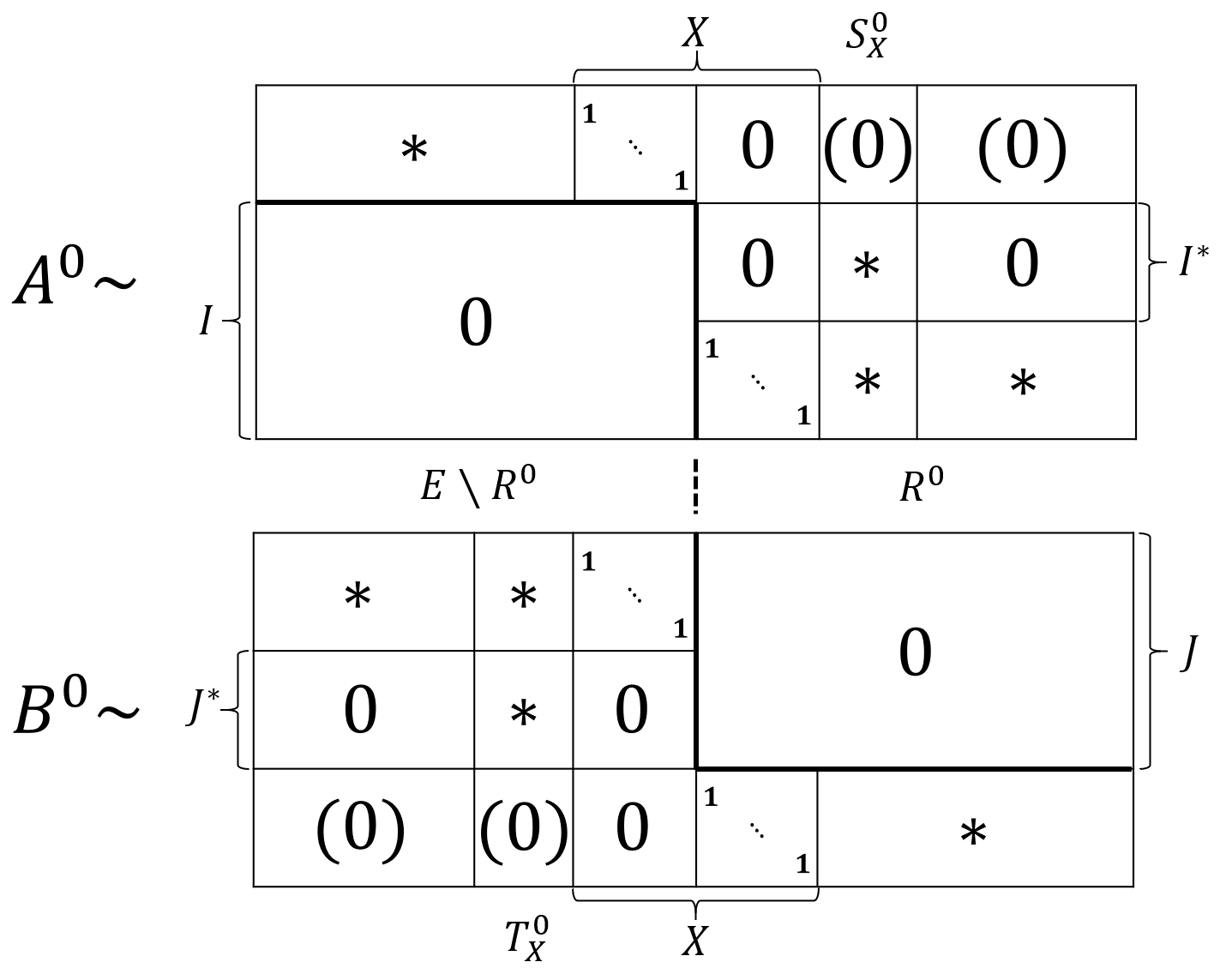}
	\caption{Change of $A^0,B^0$}
	\label{fig:AB'}
\end{figure}
Then, in $G^0_{X}$, all arcs entering $R^0$ disappear. 
Namely increasing $\kappa$ only removes arcs entering to $R^0$ and does not change the other parts.
%

Next we analyze the moment when a non-zero element appears in $((t^{\alpha})M(t^{\beta}))^0 [I,J]$.
Then, in the next step~1,
it holds
\[
(a_i^0)_k (b_i^0)_\ell \neq 0 
\]
for some $i \in [m]$, $k \in I$, $\ell \in J$.
In this case, a new nonzero element appears 
in the $i$-th column of $A^0$ or $B^0$.
\begin{description}
	\item[{\rm (a-1)}] If $i \not \in R^0$ and $i \in X$:
		In the next step 1,
		Gaussian elimination for $A^0$ (and $A$) makes 
		the new nonzero element $(a_i^0)_k = (a_i)_k$ zero. 
		Since $A^0[[n] \setminus I, R^0] = O$, this does not affect 
		$A^0[R^0]$. Therefore $R^0$ is still reachable from $S_X^0$.
		There may appear nonzero elements in $A^0[I, E \setminus R^0]$, 
		which will make $R^0$ or $S_X^0$ larger in the next step~2. 
		\item[{\rm (a-2)}] If $i \not \in R^0$ and $i \not \in X$:
		By $(a^0_i)_k  \neq 0$, 
		if $k \in I^*$, then $i$ is included to $S^0_X$.
		Otherwise there appears an arc in $G_X^0$ from $X \cap R^0$ to $i$.
		For the both cases, $i$ is included to $R^0$.
	    By $\ell \in J$, if $\ell \in J^*$, then $i$ belongs to $T_X^0$.
	    Otherwise there is an arc from $i$ to $X \setminus R^0$.  
        Thus, $R^0 \cap T_X^0$ becomes nonempty if $\ell \in J^*$,
	    and $|X \cap R^0|$ increases if $\ell \in J \setminus J^*$.
	\item[{\rm (b-1)}] If $i \in R^0$ and $i \in X$:
		Similar to the analysis of (a-1) above, 
		Gaussian elimination for $B^0$ makes $(b^0_i)_k = (b_i)_k$ zero, and  
		$R^0$ and $T_X^0$ increase or do not change.
		 
		\item[{\rm (b-2)}] If $i \in R^0$  and $i \not \in X$:
		By $(b^0_i)_\ell  \neq 0$, 
		if $\ell \in J^*$, then $i$ is included to $T_X^0$, and $R^0 \cap T_X^0 \neq \emptyset$.
		Otherwise there appears an arc from $i$ to $X \setminus R^0$,
		and $|X \cap R^0|$ increases.
\end{description}

Therefore, 
if the case (a-2) or (b-2) occurs, then 
$T_X^0 \cap R^0 \neq \emptyset$ or $|X\cap R^0|$ increases.
After $O(n)$ occurrences of the cases (a-2) and (b-2),  $T_X^0 \cap R^0$ becomes nonempty and  
$|X|$ increases.
When $X$ is updated, 
Gaussian elimination constructs the $X$-diagonal forms of $A^0,B^0$
in $O(mn^2)$ time.

We analyze the occurrences of (a-1) and (b-1).
When $(a_i^0)_k$ becomes nonzero for some $i \in X \setminus R^0, k \in I$,  
it is eliminated by the row operation, and $(a_i^0)_k = (a_i)_k$ never becomes nonzero.
Therefore, (a-1) and (b-1) occur at most $O(n |X|)$ time until $X$ is updated, 
where the row operation is executed in $O(m)$ time per each occurrence.  
The total time for the elimination is $O(nm|X|)$.
The augmentation $\kappa$ and 
the identification of the next nonzero elements are computed 
in $O(nm)$ time by searching nonzero elements in $A,B$, which
is needed for each time one of (a-1), (a-2), (b-1), and (b-2) occurs.
Thus, by the naive implementation, {\bf Deg-Det-WMI} runs in $O(mn^4)$ time. 

We improve this complexity to $O(mn^3 \log n)$ as follows.
Observe first that $\kappa$ is given by
\[
\kappa = - \max \{ c_i + \alpha_{k} + \beta_{\ell} \mid i \in [m], k \in I, \ell \in J: (a_i)_k(b_i)_{\ell} \neq 0 \}.
\]
The main idea is to sort 
indices $(i,k,\ell) \in [m] \times I \times J$ according to $c_i + \alpha_{k} + \beta_{\ell}$ and keep in a binary heap the potential indices that attain $\kappa$.
Notice that even if $(a_i)_k(b_i)_\ell$ is zero in a moment, 
it will become nonzero by row operations in (a-1) and (b-1) 
and can appear in $((t^\alpha)M(t^\beta))^0[I,J]$ later. 
On the other hand, any index $(i,k,\ell)$ with $c_i + \alpha_{k} + \beta_{\ell} > 0$
keeps $(a_i)_k(b_i)_\ell = 0$ and is irrelevant until $X$ is updated.

Suppose now that $X$, $A^0$, $B^0$, and $G_X^0$ were updated in step 1.
By BFS for $G_X^0$, we determine the reachable set $R^0$ and the index sets $I,J$.
%
We can sort $c_i + \beta_{\ell}$ $(i \in [m], \ell \in J)$
in $O(mn \log m)$ time, which is improved to $O(mn \log n)$ time as follows.
By sorting $c_i$ $(i \in [m])$ in $O(m \log m)$ time,
we obtain $|J|$ sorted lists of $c_i + \beta_{\ell}$ $(i \in [m])$ for
$\ell \in J$.
By keeping the head elements of these sorted lists in a heap,
the whole sorted list can be obtained in $O(nm \log |J|)$, as in the merge sort.

From the sorted list, we construct an array $p$ such that the $e$-th entry
$p[e]$ has all indices $(i,\ell)$
with $e$-th largest $c_i + \beta_{\ell}$ as a linked list. 
For each $k \in I$, let $p_k$ denote the copy of the array $p$, 
where $p_k[e]$ also has the value $v_{k,e} := c_i + \alpha_k + \beta_\ell$
for indices $(i,\ell)$ in $p_k[e]$. 
By the {\em head index} of $p_k$
(relative to $\alpha,\beta, I, J$), we mean the minimum index $e_k$ 
such that $p_k[e_k]$ has the value $v_{k,e_k}$ less than $0$ 
and an index $(i,\ell)$ with $\ell \in J$, where $J$ will decrease later. 
Notice that if $p_k[e]$ has the value $v_{k,e} \geq 0$, 
then $(a_i)_k (b_i)_\ell = 0$ for all indices $(i, \ell)$ in $p_k[e]$.
Construct a binary (max) heap consisting 
of the pointers to the head indices $e_k$ for all $k \in I$, 
where the key is the value $v_{k,e_k}$ of $p_k[e_k]$.
In the construction of the heap, 
if the key $v_{k,e_k}$ of 
a node is equal to the key $v_{k',e_{k'}}$ of its parent node,   
then the two nodes are combined as a single node and
the corresponding pointers are also combined as a single list.  
Then,  by referring to the root of the heap, 
we know all indices $(i,k,\ell) \in [m] \times I \times J$ 
having the maximum negative value. 
Increase $\kappa$ to the negative of this value
(i.e., $\alpha \leftarrow \alpha + \kappa {\bf 1}_{I}$,
$\beta \leftarrow \beta - \kappa {\bf 1}_{[n] \setminus J}$).
If the root has no index $(i,k,\ell)$ 
with $(a_i)_k (b_i)_\ell \neq 0$, then delete the root from the heap, 
update the head index of each $p_k$ indicated 
by the (deleted) root, 
and add the pointers of new head indices to the heap.
Suppose that the root has an index $(i,k,\ell)$ with $(a_i)_k (b_i)_\ell \neq 0$;  
then $\kappa = - c_i - \alpha_k - \beta_{\ell}$.
If $i \in X$ then
execute the row operation to make $(a_i)_k (b_i)_\ell$ zero.
As mentioned, once $(a_i)_k (b_i)_\ell$ becomes zero by the row operation,
it never becomes nonzero. Here $(a_{i'})_{k'} (b_{i'})_{\ell'}$
for another index $(i',k',\ell')$ in the root
may become nonzero from zero,
which is eliminated in the next if $i' \in X$.
Therefore, together with doing such row operations,
after looking the indices in the root at most twice,
the root has no index $(i,k,\ell)$
with $i \in X$ and $(a_i)_k (b_i)_\ell \neq 0$.
Suppose that 
there is $(i,k,\ell)$ with $i \not \in X$ and $(a_i)_k (b_i)_\ell \neq 0$.
Then $G_X^0$, $R^0$, $I$, and $J$ are updated. 
In particular, $I$ increases and $J$ decreases.  
For each newly added $k \in I$, construct array $p_{k}$ (from $p$), identify the head index of $p_k$, 
and add the pointer to the heap.
In this way, until $X$ increases, 
each index $(i,k,\ell)$ is referred to at most twice, and 
the heap is updated in $O(\log n)$ time per the reference. 
In total, $O(mn^2 \log n)$ time is required.
Thus we have:
\begin{Thm}
	Algorithm {\bf Deg-Det-WMI} runs in ${\rm O}(mn^3 \log n)$ time.
\end{Thm}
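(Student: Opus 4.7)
The plan is to decompose the runtime by the phases delimited by the augmentations of step 2-1 and, inside each phase, to bound (i) the number of elementary ``events'' that can occur and (ii) the per-event cost using the heap-based implementation sketched before the statement.

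First, since each execution of step 2-1 strictly increases $|X|$ and $|X|\le n$, there are at most $n$ phases in total; thus it suffices to show that a single phase costs $O(mn^2\log n)$. Within a phase, I would classify each iteration of step 2-2 according to which of the cases (a-1), (a-2), (b-1), (b-2) produces the first nonzero entry after the update of $\kappa$. Events of type (a-2) or (b-2) are few: each such event either makes $R^0\cap T_X^0$ nonempty and thereby terminates the phase, or strictly increases $|X\cap R^0|$; since $|X\cap R^0|\le|X|\le n$, there are $O(n)$ events of these two types per phase. Events of type (a-1) or (b-1) are more numerous, but the observation that the newly created entry is immediately erased by Gaussian elimination and never reappears bounds the number of such events per phase by $O(n|X|)=O(n^2)$. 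Each row operation in (a-1) or (b-1) takes $O(m)$ time, so the cumulative elimination work in a phase is $O(mn^2)$, and rebuilding the $X$-diagonal forms after an augmentation adds at most another $O(mn^2)$.

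The bottleneck is finding the maximizer
\[
\kappa=-\max\{c_i+\alpha_k+\beta_\ell : i\in[m],\ k\in I,\ \ell\in J,\ (a_i)_k(b_i)_\ell\ne 0\}
\]
at each iteration of step 2-2, and here the binary heap over the arrays $p_k$ enters. Building the sorted list of the $c_i+\beta_\ell$ over $(i,\ell)\in[m]\times J$ takes $O(mn\log n)$ by the heap-merge trick on $|J|$ already-sorted columns; copying this list into arrays $p_k$ for each newly appearing $k\in I$ costs $O(m)$ per extension of $I$ (and $I$ grows monotonically within a phase), giving $O(mn^2)$ setup per phase; every heap update costs $O(\log n)$. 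The essential counting claim, which I would verify carefully, is that within a phase every triple $(i,k,\ell)$ is visited by the heap only a bounded number of times: at most once when its value is realized as the current $\kappa$, and possibly once more because a subsequent row elimination in (a-1) or (b-1) can create a new nonzero at the same position before the head pointer of $p_k$ advances past it. Granting this, the number of heap operations per phase is $O(mn^2)$, for a total heap cost of $O(mn^2\log n)$ per phase.

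The main obstacle will be the bookkeeping behind this last claim: making precise how the monotone evolution of $\alpha$, $\beta$, $I$, $J$ and the head pointers of the $p_k$ interacts with the creation and deletion of nonzeros in $A$ and $B$, so that each $(i,k,\ell)$ is really charged only $O(1)$ heap touches and no potentially-maximizing triple is missed. Once this invariant is established, summing the per-phase bound $O(mn^2\log n)$ over the $n$ phases yields the claimed $O(mn^3\log n)$ running time.
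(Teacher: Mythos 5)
Your proposal follows essentially the same route as the paper's proof: at most $n$ augmentation phases, the event classification (a-1)/(a-2)/(b-1)/(b-2) with the $O(n)$ bound on (a-2)/(b-2) via $|X\cap R^0|$ and the $O(n|X|)$ bound on (a-1)/(b-1) via the ``once eliminated, never nonzero again'' observation, the merge-sorted array $p$ copied into $p_k$ for $k\in I$ with a binary max-heap over head indices, and the charging argument that each triple $(i,k,\ell)$ is referred to at most twice, giving $O(mn^2\log n)$ per phase. The only slip is that constructing $p_k$ from $p$ for a newly added $k\in I$ costs $O(mn)$ rather than $O(m)$ (the array $p$ has $O(m|J|)$ entries), but this still sums to the $O(mn^2)$ per-phase setup you state, so the final bound is unaffected.
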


\subsection{Relation to Frank's algorithm}\label{subsec:Frank}

In this subsection, we reveal the relation between our algorithm {\bf Deg-Det-WMI} and Frank's weight splitting algorithm \cite{Frank81}.
We show that the common independent sets $X$ obtained by {\bf Deg-Det-WMI} are the same as the ones obtained by a slightly modified version of Frank's algorithm.
This means in a sense that {\bf Deg-Det-WMI} is a nonstandard specialization of Frank's algorithm to linear matroids. 

Let us briefly explain Frank's algorithm; 
our presentation basically follows \cite[section 13.7]{Korte12}.
His algorithm keeps a weight splitting $c_{i}=c_{i}^{1}+c_{i}^{2}$ for each $i\in E$ and a common independent set $X$ such that $X$ is maximum for both $c_{i}^{1}$ and $c_{i}^{2}$ over all common independent sets of cardinality $|X|$.
\begin{description}
\item[0:] $c _i ^1:=c_i$, $c _i ^2 :=0$ for $i \in E$ and $X:=\emptyset$.
\item[1:] Applying elementary row operations to $A$, $B$, construct the residual graph $G_{X}$, and node sets $S_{X}$, $T_{X}$ as in Section 2.2.
\item[2:] From the weight splitting $c= c^{1} + c^{2}$, construct subgraph ${\bar G}_{X}$ of $G_X$ and node subsets $\bar S _X \subseteq S _X$, $\bar T _X \subseteq T _X$ by:
${\bar G}_{X}$ consists of arcs $ij$ with $i \in X \not \ni j$ and $c_{i}^{1}=c_{j}^{1}$ or $i \notin X\ni j$ and $c_{i}^{2}=c_{j}^{2}$, and
\begin{eqnarray}
\label{eqn:SX}
{\bar S}_{X}:= \{ i \in S_{X} \mid \forall j\in S_{X},c_{i}^{1} \geq c_{j}^{1}\},\\ 
\label{eqn:TX}
{\bar T}_{X}:= \{ i \in T_{X} \mid \forall j\in T_{X},c_{i}^{2} \geq c_{j}^{2}\}.
\end{eqnarray}
\item[3:] Let $\bar R$ be the set of nodes reachable from ${\bar S}_{X}$ in ${\bar G}_{X}$.
\item[4-1:] If $\bar R \cap \bar T _{X} \neq \emptyset$, for a shortest path $P$ from $\bar S _{X}$ to $\bar T _{X}$, replace $X$ by $X \Delta V(P)$; go to step~1.
\item[4-2:] If $\bar R \cap \bar T _{X} = \emptyset$, then let $c_{i}^{1}:=c_{i}^{1}-\epsilon$, $c_{i}^{2}:=c_{i}^{2}+\epsilon$ for $i \in \bar R$, and increase $\epsilon$ from $0$ until $\bar R$ increases.
If $\epsilon = \infty$, then output $-\infty$ and stop.
Go to step~2.
\end{description}


We consider a modified update of the weight splitting.
Let $\bar R'$ be the subset of nodes $i \in E \setminus (X \cup \bar R )$ such that 
all arcs leaving $i$ enters $X \cap \bar R$.
Then the step~4-2 can be replaced by the following:
\begin{description}
\item[4-2$'$:] If $\bar R \cap \bar T _{X} = \emptyset$, then let $c_{i}^{1}:=c_{i}^{1}-\epsilon$, $c_{i}^{2}:=c_{i}^{2}+\epsilon$ for $i \in \bar R \cup \bar R'$, and increase $\epsilon$ from $0$ until $\bar R$ increases or $\bar R'$ changes.
If $\epsilon = \infty$, then output $-\infty$ and stop.
Repeat until $\bar R$ increases and go to step~2.
\end{description}
One can easily check that $X$ keeps the optimality (\ref{eqn:opt_matroid1}), (\ref{eqn:opt_matroid2}) in the modified update.
Hence, the modified algorithm using {\bf 4-2$'$} is also correct. 

We prove that $G_{X}^{0}$, $S _X ^0$, $T _X ^0$ in our algorithm and $\bar G _{X}$, $\bar S _X$, $\bar T _X$ in modified Frank's algorithm are the same up to an obvious redundancy.
Here an arc in $\bar G _{X}$ is said to be {\it redundant} if it leaves a node $i$ that has no arc entering $i$.
\begin{Prop}
Suppose that $X$, $\alpha$ and $\beta$ are obtained in an iteration of {\bf Deg-Det-WMI}. Define weight splitting $c_{i} = c_{i}^{1}+c_{i}^{2}$ by (\ref{eqn:c_i^1}), (\ref{eqn:c_i^2}) and $\bar G _{X}$, $\bar S _{X}$ and $\bar T _{X}$ by (\ref{eqn:SX}), (\ref{eqn:TX}).
Then we have the following:
	\begin{itemize}		
		\item[{\rm (1)}] $G_{X}^{0}$ is equal to the subgraph of $\bar G _{X}$ obtained by removing redundant arcs.
		\item[{\rm (2)}] $S_{X}^{0}$ is equal to $\bar S _{X}$.
		\item[{\rm (3)}] $T_{X}^{0}$ is equal to the subset of $\bar T_{X}$ obtained by removing isolated nodes.
	\item[{\rm (4)}] $R^{0}$ is equal to $\bar R$.
	\item[{\rm (5)}] The total sum of increases $\kappa$ until $R^{0}$ changes is equal to that of increases $\epsilon$ until $\bar R$ changes in the modified Frank's algorithm.
	\end{itemize}
\end{Prop}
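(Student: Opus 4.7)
The argument hinges on two reformulations of the definitions (\ref{eqn:a_i^0}) and (\ref{eqn:b_i^0}) in terms of the weight splitting (\ref{eqn:c_i^1}), (\ref{eqn:c_i^2}). Using properness (\ref{eqn:properness}) and the tautology $\max\{\beta_\ell:(b_j)_\ell\neq 0\}=-c_j^2$ built into the definition of $c_j^2$, one obtains
\begin{align*}
(a_j^0)_k\neq 0 &\iff (a_j)_k\neq 0 \ \text{and}\ \alpha_k=-c_j^1,\\
(b_j^0)_\ell\neq 0 &\iff (b_j)_\ell\neq 0,\ \beta_\ell=-c_j^2,\ \text{and}\ a_j^0\neq 0.
\end{align*}
Applied to the $X$-diagonal entries $(a_i^0)_{\sigma(i)}=(b_i^0)_{\sigma(i)}=1$ for $i\in X$, these yield the hinge identities $c_i^1=-\alpha_{\sigma(i)}$ and $c_i^2=-\beta_{\sigma(i)}$; the remainder of the proof is a matter of translating every condition in the definitions of $G_X^0,S_X^0,T_X^0$ into a condition on $c^1,c^2$ and comparing with $\bar G_X,\bar S_X,\bar T_X$.

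For (1), an $A$-arc $(i,j)$ with $i\in X,j\notin X$ belongs to $G_X^0$ iff $(a_j^0)_{\sigma(i)}\neq 0$, which rewrites as $(a_j)_{\sigma(i)}\neq 0$ and $c_i^1=c_j^1$, exactly the $\bar G_X$ condition. A $B$-arc $(j,i)$ belongs to $G_X^0$ iff $(b_j)_{\sigma(i)}\neq 0$, $c_i^2=c_j^2$, and $a_j^0\neq 0$; the first two conjuncts are the $\bar G_X$ condition, while $a_j^0\neq 0$ is equivalent, by the first equivalence, to $j$ having some entering $A$-arc in $\bar G_X$ or belonging to $S_X^0$. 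Hence a $B$-arc in $\bar G_X\setminus G_X^0$ is exactly one whose tail $j$ has no entering arc in $\bar G_X$ and $j\notin\bar S_X$, i.e., a redundant arc relevant to reachability from $\bar S_X$. Part (3) is parallel: by (\ref{eqn:I*J*}), every $j\in T_X$ has $c_j^2=-\max_i\beta_i$, so $\bar T_X=T_X$; and $j\in T_X^0\iff j\in T_X$ and $a_j^0\neq 0$, where $a_j^0=0$ means $j$ has no entering $A$-arc in $\bar G_X$, i.e., $j$ is isolated.

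The core of the proposition is (2). By the first equivalence and (\ref{eqn:I*J*}), $j\in S_X^0\iff j\in S_X$ and $c_j^1=-\max_i\alpha_i$, while (\ref{eqn:c^1_i}) combined with (\ref{eqn:I*J*}) gives $c_j^1\leq-\max_i\alpha_i$ on all of $S_X$. Therefore $S_X^0=\bar S_X$ reduces to the invariant that $\max_{j\in S_X}c_j^1=-\max_i\alpha_i$ whenever $S_X\neq\emptyset$. I would establish this invariant by induction on the iterations of \textbf{Deg-Det-WMI}: it holds at the initialization ($\alpha=-\max_i c_i\cdot\mathbf{1}$, so $c^1=c$); it is preserved by step 2-1 via Lemma~\ref{lem:independent}, which keeps $X$ optimal for $c^1$ after augmentation; and it is preserved by step 2-2 because $\kappa$ is chosen as the minimum slack making some new triple $(i,k,\ell)$ with $k\in I$ tight, and by the case analysis (a-1)--(b-2) of Section~\ref{subsec:analysis} this eventually contributes an $S_X$-element attaining the updated value $-\max_i\alpha_i$.

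Parts (4) and (5) then follow quickly. For (4), BFS from $S_X^0=\bar S_X$ in $G_X^0$ and in $\bar G_X$ reaches the same set: the discarded $B$-arcs leave non-source nodes that have no entering arc, and such nodes are unreachable from $\bar S_X$. For (5), $\kappa$ in step 2-2 is by construction the minimum of $-(\alpha_k+\beta_\ell+c_i)$ over relevant $(i,k,\ell)\in[m]\times I\times J$, which under the hinge identities equals the minimum of $c_i^1-c_j^1$ or $c_i^2-c_j^2$ that governs $\epsilon$ in step 4-2$'$; since the vectorized updates of $\alpha$ on $I$ and $\beta$ on $[n]\setminus J$ mirror exactly the updates of $c^1,c^2$ on $\bar R\cup\bar R'$, cumulating until $R^0$ (respectively $\bar R$) changes gives equal totals. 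The main obstacle is the invariant in (2); once that is secured, (1), (3), (4), (5) are bookkeeping translations from the two rewrites above.
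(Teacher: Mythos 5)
Your two support characterizations of $A^0,B^0$ in terms of the splitting, the hinge identities $c_i^1=-\alpha_{\sigma(i)}$, $c_i^2=-\beta_{\sigma(i)}$ for $i\in X$, and the use of (\ref{eqn:I*J*}) are exactly the engine of the paper's proof, and your treatment of (1), (3) and (4) follows the same route as the paper. Two points, however, are genuine gaps rather than bookkeeping. The larger one is (5), which you dispose of in one sentence. The claim that the update $\alpha\leftarrow\alpha+\kappa{\bf 1}_{I}$, $\beta\leftarrow\beta-\kappa{\bf 1}_{[n]\setminus J}$ ``mirrors exactly'' the update of $c^1,c^2$ on $\bar R\cup\bar R'$ is precisely what must be proved, and it is not a formality: for a node $i$ with $a_i^0=b_i^0={\bf 0}$ the value $c_i^2=-\max\{\beta_\ell\mid (b_i)_\ell\neq0\}$ moves under the update if and only if every maximizing index $\ell$ lies in $[n]\setminus J$, and one has to show that this condition characterizes exactly the auxiliary set $\bar R'$ of step 4-2$'$ (this is the paper's Case 1 / Case 2 analysis). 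That identification is the entire reason Frank's update has to be modified to 4-2$'$ before $\sum\kappa$ and $\sum\epsilon$ can be compared; without it, (5) is unproved.

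The second gap is in (2). Your reduction to the invariant $\max_{j\in S_X}c_j^1=-\max_k\alpha_k$ is a correct reformulation, but the induction you propose is not carried out, and its step for 2-2 (``eventually contributes an $S_X$-element attaining the updated value'') is not an argument: the first tight triple produced by increasing $\kappa$ may fall into cases (a-1), (b-1) or (b-2) and contribute nothing to $S_X$. The paper avoids any cross-iteration induction here. It shows directly that every $i'\in S_X\setminus S_X^0$ has $a_{i'}^0={\bf 0}$, so the properness inequalities for $i'$ are all strict and $c_{i'}^1<-\alpha_{k}$ for $k\in I^*$; combined with $c_i^1=-\alpha_k$ for $i\in S_X^0$ this gives $\bar S_X=S_X^0$ within a single iteration, and you should replace the induction by this local argument. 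A smaller omission affecting (1)--(3) uniformly: $G_X$ is read off from the $X$-diagonal form of $A,B$ while $G_X^0$ is read off from the $X$-diagonal form of $A^0,B^0$, and these are produced by different row operations; the paper invokes Lemma~\ref{lem:weightA_0} to verify that the two eliminations agree on the entries your equivalences refer to, a compatibility your proposal silently assumes.
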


\begin{proof}
Recall (the proof of) Lemma \ref{lem:independent} that $X$ is a common independent 
set of ${\bf M}(A)$ and ${\bf M}(B)$. 
Suppose that $X = \{1,2,\ldots,h\}$ and
$A^0[[h], X] = B^0[[h],X] = I$ with $\alpha_1 \geq \alpha_2 \geq \cdots \geq \alpha_h$ and $\beta_1 \geq \beta_2 \geq \cdots \geq \beta_h$.
    Observe first that $S^0_X \subseteq S_X$.
Indeed, from Lemma~\ref{lem:weightA_0} and (\ref{eqn:I*J*}), $A[I^*,X]$ is a zero matrix.
Therefore, if $a_{i}^{0}$ has a nonzero vector in a row in $I^*$, i.e., $i \in S_X^0$, then $a_{i}$ is independent from $a_{i'}$ $(i' \in X)$, i.e.,  $i \in S_X$. 
Consider the weight splitting of nodes in $S_X^0$.
For $i \in S_X^0$, $c _i ^1 = - \alpha_k\,(k \in I^*)$, and
$- \alpha_k \geq c_j^1$ for $j \in S_X$ by (\ref{eqn:I*J*}).
Thus $S_X^0 \subseteq \bar S_X$.
Also, for any $i' \in S_X \setminus S_X^0$,
$a^0 _{i'}$ is a zero vector.
Indeed, it holds $(a _{i'})_{k^*} \neq 0 = (a^0 _{i'})_{k^*}$ for some
$k^* \in I^*$.
This means $\alpha_{k^*} + \beta_{\ell} + c_{i'} < 0$
for all $\ell \in [n]$ with $(a _{i'})_{k^*}(b_{i'})_\ell \neq 0$.
By (\ref{eqn:I*J*}), it holds $\alpha_k \leq \alpha_{k^*}$
for all $k \in [n]$, and
$\alpha_{k} + \beta_{\ell} + c_{i'} < 0$ for all $k,\ell$ with $(a
_{i'})_{k}(b_{i'})_\ell \neq 0$, which implies $a^0 _{i'} = {\bf 0}$.
Thus, it holds $c_{i'}^1 < - \alpha_k$ for $k \in I^*$.
Then $c_{i'}^1 < - \alpha_k = c_i^1$ for $i \in S_X^0$. 
Thus we have (2).
  
Showing (3) is similar. 
As above, we see that $T_X^0 \subseteq T_X$ and for $i \in T _X ^0$, $c _i ^2 = -\beta_\ell$ ($\ell \in J^*$).
Then $T_X^0 \subseteq \bar T_X$.
Let $i \in T_X \setminus T_X^0$.
Then  $b _i ^0$ is a zero vector, and so is $a _i ^0$.
Suppose that arc $ji$ for $j \in X$ exists in $G_X$.
Recall that $A[[h],X]$ and $B[[h],X]$ are lower triangular.
Then $j$ is at least the minimum index $k$ with $(a_i)_k \neq 0$.
Then for $\ell \in J^{*}$, $c_j^1 = - \alpha_j \geq - \alpha_k > c_i + \beta_\ell 
= c_i - c_i^2 = c_i^1$, where the strictly inequality follows from the fact that $a _i ^0$ and $b _i ^0$ are zero vectors.
Then $ji$ does not exist in $\bar G_X$. 
Similarly, arc $ij$ does not exist in $\bar G_X$
and this means $i$ is an isolated node.
Thus we have (3).

    Next we compare $G_X^0$ and $\bar G_X$ to prove (1) and (4).
    Consider a node $i \in E \setminus X$
    such that $a_{i}^{0}$ and $b_i^0$ are nonzero.
    Suppose that arc $ki$ exists in $G_X^0$, i.e., $(a_i^0)_k \neq 0$
for $k \in [h] = X$.
    Then $c_i^1 = - \alpha_k = c_k^1$.
    We show that $ki$ exists also in $\bar G _X$.
    Since $\alpha_j \geq \alpha_k$ $(j \neq k)$ implies $(a_k)_j = 0$ by Lemma~\ref{lem:weightA_0},
    Gaussian elimination making $A$ $X$-diagonal
    does not affect $(a_{i})_k$.
    Thus the arcs $ki$ exists in $G_X$ and in $\bar G_X$.
    Similarly, if $i \ell$ exists in $G_X^0$, then $i \ell$ exists in
$\bar G_X$.
    Therefore, for any node $i \in E \setminus X$ with nonzero
$a_{i}^{0}$,$b_i^0$,
    the arcs incident to $i$ are the same in $G_X^0$ and $\bar G_X$.
    Consider a node $i \in E \setminus X$
    such that $a_{i}^{0}$ and $b_{i}^{0}$ are zero vectors.
    In $G_X^0$, there are no arcs incident to $i$.
    For $k \in X, \ell \in [n]$ with $(a_{i})_k(b_i)_{\ell} \neq 0$,
    it holds $c_k^1 = - \alpha_k > \beta_\ell + c_i \geq - c_i^2 + c_i = c_i^1$.
    This means that arcs $ki$ entering $i$ do not exist in $\bar G_X$,
and thus arcs $i \ell$ leaving $i$ are redundant.
Thus we have (1).
    From (1), (2), and (3), we have (4).
 
  Finally we prove (5).
   The step 2-2 in {\bf Deg-Det-WMI} changes $\alpha,\beta$ as
   $\alpha \leftarrow \alpha + \kappa {\bf 1}_{I}$, $\beta \leftarrow
\beta - \kappa {\bf 1}_{[n] \setminus J}$.
   We analyze the corresponding change of the weight splitting $c=c^1+
c^2$ defined by (\ref{eqn:c_i^1}),$\,$(\ref{eqn:c_i^2}).
   Consider $i \in [m]$ such that $a_{i}^{0}$ and $b_{i}^{0}$ are
nonzero vectors.
   Suppose that $i \in R^0 = \bar R$.
   Then $a_i^0$ and $b_i^0$ have nonzero entries
   in a row in $I$ and in $[n] \setminus J$, respectively;
   see Figure \ref{fig:AB'}.
   Therefore $c_i^1 = - \alpha_k$ for some $k \in I$
   and $c_i^2 = - \beta_\ell$ for some $\ell \in [n] \setminus J$,
   and $c_i^1,c_i^2$ are changed as
   $c_{i}^{1} \leftarrow c_{i}^{1} - \kappa$, $c_{i}^{2} \leftarrow
c_{i}^{2} + \kappa$.
   Suppose that $i \notin R^0$.
   Then $a_i^0$ and $b_i^0$ have nonzero entries
   in a row in $[n] \setminus I$ and in $J$, respectively.
   In particular, $c_i^1 = - \alpha_k$ for some
$k \in [n] \setminus I$
   and $c_i^2 = - \beta_\ell$ for some $\ell \in J$.
   Then the weight splitting does not change.
   Thus, for any node $i$ with nonzero $a_{i}^{0},b_{i}^{0}$,
   the update corresponds to
   the step~4-2 or 4-2$'$.

    Consider a node $i$ with $a_{i}^{0} = b_{i}^{0} = {\bf 0}$.
    Let $\varLambda$
    be the set of indices $k$
    that attain $\max_{k \in [n]: (a_i)_k \neq 0} \alpha_k$,
    and let $\varPi$ be the set of indices $\ell$
    that attain $\max_{\ell \in [n]: (b_i)_\ell \neq 0} \beta_\ell = - c_i^2$.
    \begin{description}
    \item[{\rm Case 1:}] $\varPi \cap J \neq \emptyset$.
    Then $c_i^2$ does not change and so does $c_i^1$.
    If $\varLambda \cap I \neq \emptyset$, then
    $\kappa$ can increase until $c_i^1$ becomes $- \alpha_k$ for some
$k \in \varLambda$.
    \item[{\rm Case 2:}] $\varPi \cap J = \emptyset$
    ($\Leftrightarrow \varPi \subseteq [n] \setminus J$)
        Then $c_i^2$ changes as $c_i^2 \leftarrow c_i^2 + \kappa$,
        and hence $c_i^1$ changes as $c_i^1 \leftarrow c_i^1 - \kappa$.
        Here $\kappa$ can increase until $\varPi \cap J \neq \emptyset$;
        then the situation goes to (Case 1).
\end{description}
Notice that arc $i \ell$ exists in ${\bar G}_X$ precisely
when $- c_i^2 = \beta_\ell = - c_{\ell}^2$ for $\ell \in X \cap J$,
and hence a node $i$ in the case 2 is precisely a node in $\bar R'$.
Therefore the changes of the weight splitting are the same in {\bf Deg-Det-WMI}
and in the modified Frank's algorithm (using step~4-2$'$).
The steps are iterated for the same zero submatrix until $R^0$ changes.
Therefore,
the total sum of $\kappa$ is the same as that of $\epsilon$ in the
modified Frank's algorithm.
\end{proof}

By this property, the obtained sequences of common independent sets $X$
 can be the same in {\bf Deg-Det-WMI} and the modified Frank's algorithm.
Therefore {\bf Deg-Det-WMI} can also be viewed as yet another implementation
of Frank's algorithm for linear matroids.
A notable feature of {\bf Deg-Det-WMI} is
to skip unnecessary eliminations in constructing the residual graphs.
To see this fact,
consider the partition $\{\sigma_1,\sigma_2,\ldots,\sigma_{n'}\}$ of $[n]$
such that $k,k' \in [n]$ belong to the same part if and only if
$\alpha_k = \alpha_k'$.
Then the elimination matrix $K$
is a block diagonal matrix with block diagonals of size $|\sigma_{i}|
\times |\sigma_{i}|$;
recall (\ref{eqn:Kt^alpha}).
This means that the Gaussian elimination for $A$ in step~1
is done in $O(m \sum_{i}|\sigma_i|^2)$ time.
Therefore, if values $\alpha_k, \beta_\ell$ are scattered,
then $K,L$ are very sparse, and the update of $G_{X}^0$ after $X$
changes is very fast.
On the other hand, necessary eliminations skipped at this moment
will be done in the occurrences of (a-1) and (b-1).
Hence, {\bf Deg-Det-WMI} reduces eliminations compared with
the usual implementation of Frank's algorithm to linear matroids.
More thorough analysis
(e.g.,
incorporating Cunningham's estimate~\cite{Cunningham} for the length of augmenting paths)
is left to a future work.

We close this paper by giving an example in which
the elimination results are actually different in the two algorithms.

\begin{Ex}
Consider matrices
\[
 A = \left(
    \begin{array}{ccccc}
0 & 0 & 1 & 0 & 1 \\
   1 & 1 & 0 & 0 & 0 \\
   0 & 0 & -1 & 1 & 1 \\
0 & 1 & 0 & 1 & 0
    \end{array}
  \right),
 B = \left(
    \begin{array}{ccccc}
1 & 1 & 1 & 0 & 1 \\
   0 & 0 & 0 & 1 & 1 \\
   0 & 0 & 0 & 0 & -1 \\
0 & 1 & 0 & 1 & 1
    \end{array}
  \right)
\]
and weight $c=(3\ 2\ 3\ 1\ 1)$.
The both algorithms for this input can reach at
$\alpha =(-2\ -2\ -2\ -2)$, $\beta = (-1\ 0\ 0\ 0)$
and $X=\{1,2\}$ without elimination.
Consider {\bf Deg-Det-WMI} from this moment.
The matrices $A^0$ and $B^0$ are given by
\[
  A^{0} = \left(
    \begin{array}{ccccc}
0 & 0 & 1 & 0 & 0 \\
   1 & 1 & 0 & 0 & 0 \\
   0 & 0 & -1 & 0 & 0 \\
0 & 1 & 0 & 0 & 0
    \end{array}
  \right),
 B^{0} = \left(
    \begin{array}{ccccc}
1 & 0 & 1 & 0 & 0 \\
   0 & 0 & 0 & 0 & 0 \\
   0 & 0 & 0 & 0 & 0 \\
0 & 1 & 0 & 0 & 0
    \end{array}
  \right).
\]
The Gaussian elimination makes $(a_{2}^{0})_2$ zero.
Then $G_X^0$ consists of one arc $31$, and
$S_X^0 = \{3\}$ and $T_X^0 = \emptyset$.
The reachable set $R^{0}$ is determined as $R^0= \{1,3\}$, and $I,J$
are given by $I = \{1,2,3\}$, $J =\{2,3,4\}$, $I^* = \{1,3\}$, and
$J^* =\{2,3\}$.
Then $\alpha,\beta$ are changed as
$\alpha = (-1\ -1\ -1\ -2)$, $\beta = (-2\ 0\ 0\ 0)$ without
occurrences of (a-1) and (b-1).
Nonzero elements appear in $A^0[I^*,\{4,5\}]$
and $B^0[J^*,\{4,5\}]$, which implies $S^{0}_{X} \cap T_{X}^{0}=\{4,5\}$.
So $X$ is increased.

Therefore {\bf Deg-Det-WMI} succeeds the augmentation
without eliminating $(b_2)_1$,
whereas Frank's algorithm eliminates this element
in constructing $G_X$.
\end{Ex}

\section*{Acknowledgments}
The authors thank Kazuo Murota for comments and the anonymous referee
for careful reading and helpful comments.
The second author was supported by JSPS KAKENHI Grant Numbers JP17K00029
and JST PRESTO Grant Number JPMJPR192A, Japan.



\begin{thebibliography}{8}
	\bibitem{Cunningham}
	W. H. Cunningham:
	Improved bounds for matroid partition and intersection algorithm,
	{\em SIAM Journal on Computing} {\bf 15} (4) (1986), 948--957.

	\bibitem{Edmonds67}
	J. Edmonds: 
	Systems of distinct representatives and linear algebra, 
	{\em Journal of Research of the National Bureau of Standards} {\bf 71B} 
	(1967), 241--245.
	
	\bibitem{Edmonds79}
	J. Edmonds: Matroid intersection,
	{\em Annals of Discrete Mathematics} \textbf{4} (1979), 39--49.
	
	\bibitem{Frank81}
	A. Frank: A weighted matroid intersection algorithm,
	{\em Journal of Algorithms} \textbf{2} (1981), 328--336.
	
	\bibitem{Gabow96}
	H. N. Gabow and Y. Xu: Efficient theoretic and practical algorithms for linear matroid intersection problems, 
	{\em Journal of Computer and System Sciences} \textbf{53} (1996), 129--147.
	
		\bibitem{GGOW15}
		A. Garg, L. Gurvits, R. Oliveira, and A. Wigderson:
		Operator scaling: theory and applications,
		{\em Foundations of Computational Mathematics} (2019).
	
	\bibitem{Harvey09}
	N. J. A. Harvey:
	Algebraic algorithms for matching and matroid problems,
	{\em SIAM Journal on Computing} {\bf 39} (2) (2009), 679--702.

	\bibitem{HH_degdet}
	H. Hirai: Computing the degree of determinants via discrete convex optimization on Euclidean buildings,
	{\em SIAM Journal on Applied Algebra and Geometry} {\bf 3} (3) (2019), 523--557.
	
	\bibitem{HKK2019}
	C.-C. Huang, N. Kakimura, and N. Kamiyama:
	Exact and approximation algorithms for weighted matroid intersection,
	{\em Mathematical Programming, Series A} {\bf 177} (2019), 85--112.
	
		\bibitem{IQS15b}
		G. Ivanyos, Y. Qiao, and  K. V. Subrahmanyam:
		Constructive noncommutative rank computation in deterministic polynomial time over fields of arbitrary characteristics,
		{\it Computational Complexity} {\bf 27} (2018), 561--593.
	
	
	\bibitem{Murota95_SICOMP}	
	K. Murota: Computing the degree of determinants 
	via combinatorial relaxation,
	{\it SIAM Journal on Computing} {\bf 24} (1995), 765--796.	
	
	\bibitem{MurotaMatrix}
	K. Murota: {\it Matrices and Matroids for Systems Analysis,} 
	Springer-Verlag, Berlin (2000).
	
	\bibitem{Lovasz89}
		L. Lov\'asz: 
		Singular spaces of matrices and their application in combinatorics, 
		{\em Boletim da Sociedade Brasileira de Matem\'atica} {\bf 20} (1989), 87--99. 
	
	\bibitem{Korte12}
	B. Korte and J. Vygen: {\it Combinatorial Optimization: Theory and Algorithms, Sixth Edition,}
	Springer-Verlag, Berlin (2018).
	
	\bibitem{Lawler75}
	E. L. Lawler: Matroid intersection algorithms,
	{\em Mathematical Programming} \textbf{9} (1975), 31--56.
	
	\bibitem{Schrijver}
	A. Schrijver: {\it Combinatorial Optimization---Polyhedra and Efficiency,} 
	Springer-Verlag, Berlin (2003).

	\bibitem{TI74}
	N. Tomizawa and M. Iri:
	An algorithm for determining the rank of a triple matrix product AXB with application to the problem of discerning the existence of the unique solution in a network,
	{\em Transaction on the Institute of Electronics, Information, and Communications Engineers in Japan} {\bf 57A} (1974), 834--841.
	(English translation in {\em Electronics and Communications in Japan} {\bf 57A} (1974), 50--57)
\end{thebibliography}
\end{document}